\newtheorem{lem}{Lemma}
\newtheorem{thm}{Theorem}
\newtheorem{defn}{Definition}
\newtheorem{coro}{Corollary}
\newtheorem{clm}{Claim}
\newcommand{\E}[1]{\mathbb{E}\left[{#1}\right]}
\newcommand{\posfunc}[1]{\lvert {#1}\rvert ^{+}}
\newcommand{\HyperExp}{\textit{HyperExp}}
\newcommand{\SExp}{\textit{ShiftedExp}}
\crefname{equation}{}{}
\Crefname{equation}{}{}
\crefname{thm}{theorem}{theorems}
\Crefname{thm}{Theorem}{Theorems}
\crefname{clm}{claim}{claims}
\Crefname{clm}{Claim}{Claims}
\Crefname{coro}{Corollary}{Corollaries}
\Crefname{lem}{Lemma}{Lemmas}
\Crefname{sec}{Section}{Sections}
\crefname{app}{appendix}{appendices}
\Crefname{app}{Appendix}{Appendices}
\crefname{prop}{proposition}{propositions}
\Crefname{prop}{Proposition}{Propositions}
\Crefname{propty}{Property}{Properties}
\crefname{figure}{fig.}{figures}
\Crefname{figure}{Fig.}{Figures}
\crefname{defn}{definition}{definitions}
\Crefname{defn}{Definition}{Definitions}
\crefname{fact}{fact}{facts}
\Crefname{fact}{Fact}{Facts}
\crefname{appendix}{appendix}{appendices}
\Crefname{appendix}{Appendix}{Appendices}
\crefname{algo}{algorithm}{algorithms}
\Crefname{algo}{Algorithm}{Algorithms}
\crefname{algorithm}{algorithm}{algorithms}
\Crefname{algorithm}{Algorithm}{Algorithms}
\crefname{conj}{conjecture}{conjectures}
\Crefname{conj}{Conjecture}{Conjectures}
\crefname{obs}{observation}{observations}
\Crefname{obs}{Observation}{Observations}
\begin{document}
\newcommand{\totalCaches}{m}
\newcommand{\nFork}{n}
\newcommand{\nPartialFork}{r}
\newcommand{\kJoin}{k}
\newcommand{\xm}{x_m}


\title{Efficient Replication of Queued Tasks for \\Latency Reduction in Cloud Systems}
\author{
\IEEEauthorblockN{Gauri Joshi}
\IEEEauthorblockA{EECS Dept.,
MIT\\
Cambridge, MA 02139, USA \\
Email: gauri@mit.edu}
\and
\IEEEauthorblockN{Emina Soljanin}
\IEEEauthorblockA{
Bell Labs, Alcatel-Lucent \\
Murray Hill NJ 07974, USA\\
Email: emina@bell-labs.com}
\and
\IEEEauthorblockN{Gregory Wornell}
\IEEEauthorblockA{
EECS Dept., MIT \\
Cambridge MA 02139, USA\\
Email: gww@mit.edu}
}

\maketitle

\renewcommand{\thefootnote}{} 
\footnotetext[1]{This work was supported in part by NSF under Grant No. CCF-1319828, AFOSR under Grant No. FA9550-11-1-0183, and a Schlumberger Faculty for the Future Fellowship.} \renewcommand{\thefootnote}{\arabic{footnote}}
\vspace{-0.25cm}
\begin{abstract}
In cloud computing systems, assigning a job to multiple servers and waiting for the earliest copy to finish is an effective method to combat the variability in response time of individual servers. Although adding redundant replicas always reduces service time, the total computing time spent per job may be higher, thus increasing waiting time in queue. The total time spent per job is also proportional to the cost of computing resources. We analyze how different redundancy strategies, for eg.\ number of replicas, and the time when they are issued and canceled, affect the latency and computing cost. We get the insight that the \emph{log-concavity} of the service time distribution is a key factor in determining whether adding redundancy reduces latency and cost. If the service distribution is log-convex, then adding maximum redundancy reduces both latency and cost. And if it is log-concave, then having fewer replicas and canceling the redundant requests early is more effective.

\end{abstract}

\section{Introduction}
\label{sec:intro}
\subsection{Motivation}

An increasing number of applications are now hosted on the cloud. Some examples are streaming (NetFlix, YouTube), storage (Dropbox, Google Drive) and computing (Amazon EC2, Microsoft Azure) services. 
A major advantage of cloud computing and storage is that the large-scale sharing of resources provides scalability and flexibility. A side-effect of the sharing of resources is the variability in the latency experienced by the user due to queueing, pre-emption by other jobs with higher priority, server outages etc. The problem becomes further aggravated when the user is executing a job with several parallel tasks on the cloud, because the slowest task becomes the bottleneck in job completion. Thus, ensuring seamless, low-latency service to the end-user is a challenging problem in cloud systems. 

One method to reduce latency that has gained significant attention in recent years is the use of redundancy. In cloud computing, executing a task on multiple machines and waiting of one to finish can significantly reduce the latency \cite{dean_tail_2013}. Similarly, in cloud storage systems requests to access the content can be assigned to multiple replicas,  such that it is only sufficient to download one replica. This can help reduce latency significantly. 

However, redundancy can result in increased use of resources such as computing time, and network bandwidth. In frameworks as Amazon EC2 and Microsoft Azure which offer computing as a service, the server time spent is proportional to the money spent in renting the machines. In this work we aim to understand this trade-off between latency and computing cost and propose scheduling policies that can achieve a good trade-off. Our analysis also results in some fundamental advances in the analysis of queues with redundant requests.

\begin{table*}[t]
\footnotesize
\centering
\caption{Optimal redundancy strategies when the service time is log-concave or log-convex. `Canceling redundancy early' means that we cancel redundant tasks when any $1$ task reaches the head of its queue, instead of waiting for it to be served. 
\label{tbl:result_summary}}
\begin{tabular}{| p{2.9cm} | p{4.0cm} | p{2.0cm} | p{2.5 cm} | p{2.45 cm} | }
\hline
 & \multicolumn{2}{c|} { \textbf{Log-concave service time} } & \multicolumn{2}{c|} { \textbf{Log-convex service time} }\\
\hline
 & \textbf{ Latency-optimal} & \textbf{Cost-optimal} & \textbf{Latency-optimal} & \textbf{Cost-optimal} \\
\hline
\vspace{0.05cm} \textbf{Cancel redundancy early or keep it?} \vspace{0.05cm}  
& \vspace{0.05cm} {Low load:~Keep~Redundancy, \,\, High load:~Cancel early} \vspace{0.05cm} & \vspace{0.05cm} Cancel early & \vspace{0.05cm} Keep Redundancy & \vspace{0.05cm} Keep Redundancy\\
\hline
\vspace{0.05cm} \textbf{Partial forking to $r$ out of $n$ servers} 
&  \vspace{0.05cm} {Low load:~$r=n$ (fork to all), \,\,\, High load:~$r = 1$ (fork to one)} \vspace{0.1cm} & \vspace{0.05cm} $r=1$ & \vspace{0.05cm}  $r=n$ & \vspace{0.05cm} $r=n$ \\
\hline
\end{tabular}
\end{table*}

\subsection{Previous Work}

\textit{\textbf{Systems Work}}:
One of the earliest instances of exploiting redundancy to reduce latency is the use of multiple routing paths \cite{maxemchuk2} to send packets in networks. 
A similar idea has also been recently studied in \cite{vulimiri_low_2013}. In large-scale cloud computing frameworks such as MapReduce \cite{map_reduce}, the slowest tasks of a job (stragglers) become a bottleneck in its completion. Several recent works in systems such as  \cite{ananthanarayanan_effective_2013, zaharia_sparrow} explore straggler mitigation techniques where redundant replicas of straggling tasks are launched to reduce latency. 

Although the use of redundancy has been explored in systems literature, there is little work on the rigorous analysis of how it affects latency, and in particular the cost of resources. We now review some of that work. 

\textit{\textbf{Exponential Service Time}}:
In distributed storage systems, erasure coding can be used to store a content file on $n$ servers such that it can be recovered by accessing any $k$ out of the $n$ servers. Thus download latency can be reduced by forking each request to all $n$ servers and waiting for any $k$ to respond. In \cite{gauri_yanpei_emina_allerton, gauri_yanpei_emina_jsac} we found bounds on the expected latency using the $(n,k)$ fork-join model with exponential service time. This is a generalization of the $(n,n)$ fork-join system, which was actively studied in queueing literature \cite{flatto1984two, nelson_tantawi}. In recent years, there is a renewed interest in fork-join queues due to their application to distribution computing frameworks such as MapReduce. Another related model with a centralized queue instead of queues at each of the $n$ servers was analyzed in \cite{mds_queue}. Most recently, \cite{gardner_sigmetrics_2015} presents an analysis of latency with heterogeneous job classes for the replicated ($k=1$) case with exponential service time. 

\textit{\textbf{General Service Time}}:
Few practical systems have exponentially distributed service time. For example, studies of download time traces from Amazon S3 \cite{docomo_1, docomo_2} indicate that the service time is not exponential in practice, but instead a shifted exponential. For service time distributions that are `new-worse-than-used' \cite{cao_nbu_1991}, it is shown in \cite{koole_righter_2008} that it is optimal to fork a job to maximum number of servers. The choice of scheduling policy for new-worse-than-used (NWU) and new-better-than-used (NBU) distributions is also studied in \cite{righter_job_rep_2009, shah_when_2013, sun_shroff}. The NBU and NWU notions are closely related to the log-concavity of service time studied in this work. 


\textit{\textbf{The Cost of Redundancy}}:
If the service time is assumed to be exponential, then adding redundancy does not cause any increase in cost of computing time. But since the exponential assumption does not generally hold true in practice, it is important to determine the cost of using redundancy. Simulation results with non-zero fixed cost of removal of redundant requests is considered in \cite{shah_when_2013}. The total server time spent on each job is considered in \cite{wang_efficient_2014, sigmetrics_arxiv_2015} for a distributed system without queueing of requests. In \cite{gauri_mama_2015} we presented an analysis of the latency and cost of the $(n,k)$ fork-join with and without early cancellation of redundant tasks.

\subsection{Our Contributions}

In this work, we consider a general service time distribution, unlike exponential service time assumed in many previous works. We analyze the impact of redundancy on the latency, and also the computing cost (total server time spent per job). Incidentally, our computing cost metric serves as a powerful tool to compare different redundancy strategies in the high traffic regime. 

The analysis gives the insight that the log-concavity (log-convexity) of the tail distribution $\bar{F}_X$ of service time is a key factor in determining when redundancy helps. Here are some instances, that are also summarized in Table~\ref{tbl:result_summary}. For example, a redundancy strategy is to fork each job to queues at $n$ servers, and wait for any one replica to finish. An alternate strategy is to cancel the redundant replicas as soon as any one reaches the head of its queue. We can show that early cancellation of redundancy can reduce both latency and cost for log-concave $\bar{F}_X$, but it is not effective for log-convex $\bar{F}_X$. In another instance, suppose we fork each job to only a subset $r$ out of the $n$ servers. Then we can show that forking to more servers (larger $r$) is always better for log-convex $\bar{F}_X$. But for log-concave $\bar{F}_X$, larger $r$ reduces latency only in the low traffic regime, and always increases the computing cost. 

\section{Problem Formulation}
\label{sec:prob_setup}
\subsection{Fork-Join Model and its Variants}
\label{subsec:sys_model}

Consider a distributed system with $n$ statistically identical servers. We define the $(n,1)$ fork-join system as follows.

\begin{defn}[$(n,1)$ fork-join system]
\label{defn:fork_join}
Each incoming job is forked into $\nFork$ tasks that join first-come first-serve queues at the $\nFork$ servers. When any one task is served, all remaining tasks are canceled and abandon their queues immediately. 
\end{defn}

The term `task' refers to a replica of the job. This is a special case of the $(n,k)$ fork-join system considered in \cite{gauri_yanpei_emina_allerton, gauri_yanpei_emina_jsac} where any $k$ out of $n$ tasks are sufficient to complete the job. General $k>1$ arise in approximate computing, or in content download from erasure coded distributed storage. Fig.~\ref{fig:fork_join_queue} illustrates the $(3,1)$ fork-join system. 
%


\begin{figure}[t]
\centering
\includegraphics[width=3.2in]{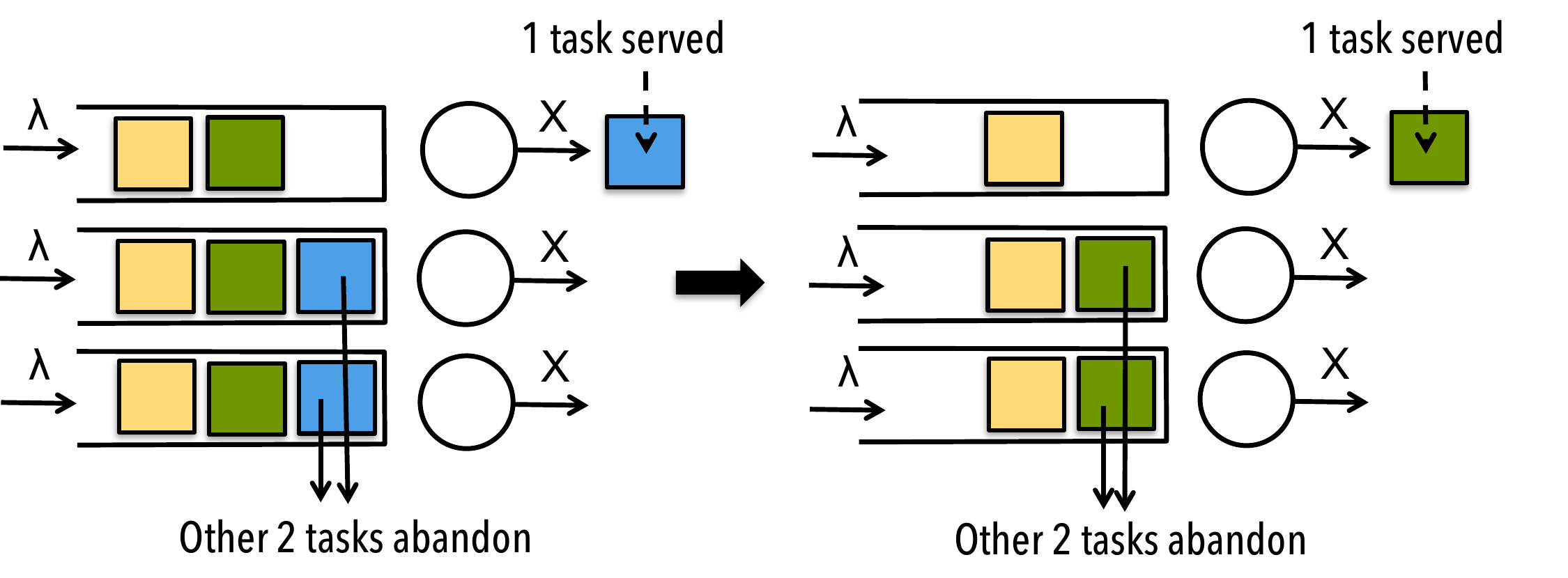}
\caption{The $(3,1)$ fork-join system. When any $1$ out of $3$ tasks of a job is served, the remaining $2$ tasks abandon their queues immediately.\label{fig:fork_join_queue}}
\end{figure}

%
%


Instead of waiting for any one task to finish, we could cancel the redundant tasks early, when any task starts service. A similar idea has been proposed in systems work \cite{zaharia_sparrow}. We refer to this variant as the $(n,1)$ fork-early cancel system defined formally as follows. 

\begin{figure}[t]
\centering
\includegraphics[width=2.9in]{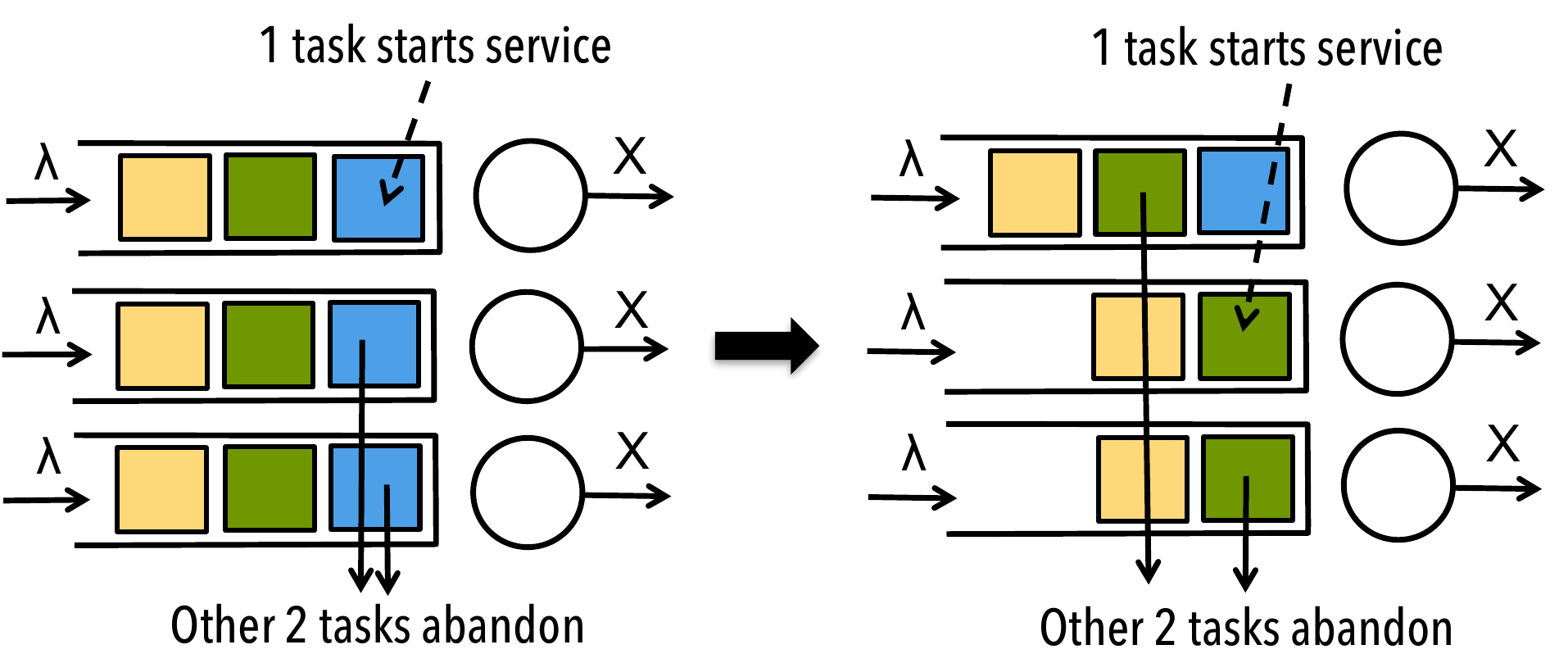}
\caption{The $(3,1)$ fork-early cancel system. When any $1$ out of $3$ tasks of a job starts service, the others abandon their queues. \label{fig:fork_early_cancel}}
\end{figure}

\begin{defn}[$(n,1)$ fork-early cancel system]
Each incoming job is forked to the $n$ servers. When any task starts service, we cancel the redundant tasks immediately. If more than one tasks start service simultaneously, we preserve any one task chosen uniformly at random. 
\end{defn}

Fig.~\ref{fig:fork_early_cancel} illustrates the $(3,1)$ fork-early cancel system. 
Early cancellation can save the total time spent per job (computing cost), but could result in an increase in latency because of loss of diversity. In Section~\ref{sec:rep_with_queueing} we compare the $(n,1)$ fork-join and the $(n,1)$ fork-early-cancel systems. 

Due to the network cost of issuing and canceling tasks, it may be prohibitively expensive to fork a job to all $n$ servers. Thus, we consider a partial forking variant defined as follows.

\begin{defn}[$(n, r, 1)$ partial fork-join system]
Each incoming job is forked into $r$ out of the $n$ servers. When any one task is served, the redundant tasks are canceled immediately and the job exits the system.
\end{defn}
 
The $r$ servers can be chosen according to different scheduling policies such as random, round-robin, least-work-left etc. Partial forking can save total computing time as well as the network cost, which is proportional to the number of servers each job is forked to. In \Cref{sec:partial_fork} we develop insights into the best choice of $r$ and the scheduling policy to achieve a good latency-cost trade-off. 

Other variants include a combination of partial forking and early cancellation, or delaying invocation of some of the redundant tasks. Although not considered here, our analysis techniques can be extended to these variants.

\subsection{Arrival and Service Distributions}
Consider that jobs arrive to the system at rate $\lambda$ per second, according to a Poisson process. The Poisson assumption is required only for the exact analysis and bounds of latency $\E{T}$ (defined below). The results for cost $\E{C}$, and the insights into choosing the best redundancy strategy hold for any arrival process. 

After a task of the job reaches the head of its queue, the time taken to serve it can be random due to various factors such as virtualization, disk seek time, server outages, pre-emption by other jobs etc. We model this task service time by the random variable $X > 0$, with cumulative distribution function $F_X(x)$ and assume that it is i.i.d.\ across requests and servers. Dependence of the service time on the job size can be modeled by adding a constant to $X$. For example, some recent work \cite{docomo_1, docomo_2} on analysis of content download from Amazon S3 observed that $X$ is shifted exponential, where $\Delta$ is proportional to the size of the content and the exponential part is the random delay in starting the data transfer.

We use $\bar{F}_X(x)$ to denote $\Pr(X>x)$, the tail distribution (inverse CDF) of $X$. We use $X_{1:n}$ to denote the smallest of $n$ i.i.d.\ random variables $X_1, X_2, \dots X_n$.

\subsection{Latency and Cost Metrics}
\label{subsec:perf_metrics}

We now define the metrics of the latency and resource cost whose trade-off is analyzed in the rest of the paper.

\begin{defn}[Latency]
The latency $\E{T}$ is defined as the expected time from when a job arrives, until when any one of its tasks is complete. 
\end{defn}


\begin{defn}[Computing Cost]
The expected computing cost $\E{C}$ is the expected total time spent by the servers serving a job, not including the time spent in the queue. 
\end{defn}

If a task is canceled before it reaches the head of its queue, the cost incurred at that server is zero. In computing-as-a-service frameworks, the expected computing cost is proportional to money spent on renting machines to run a job on the cloud. Although not analyzed explicitly in this paper, we note that there is also a network cost of issuing and canceling the redundant tasks, proportional to $n$ for the $(n,1)$ fork-join and fork-early-cancel system, and $r$ in the $(n,r,1)$ partial-fork-join system. 

\section{Preliminary Concepts}
\label{sec:key_concepts}
We present some preliminary concepts that are vital for understanding the results presented in the rest of the paper. 

\subsection{Using $\E{C}$ to Compare Systems}
\label{subsec:using_E_C}

Since the cost metric $\E{C}$ is the expected time spent by servers on each job, higher $\E{C}$ implies higher expected waiting time for subsequent jobs. Since the latency $\E{T}$ is dominated by the waiting time at high arrival rates, $\E{C}$ can be used to compare different redundancy policies in the high traffic regime. In particular, we can only compare policies that are symmetric across the servers, defined formally as follows.
\begin{defn}[Symmetric Policy]
\label{defn:symmetric_forking}
In a symmetric policy, the tasks of each job are forked one or more of the $n$ servers such that the expected task arrival rate is equal across the servers.
\end{defn}

Most commonly used policies: random, round-robin, shortest queue etc.\ are symmetric across the $n$ servers. In \Cref{clm:capacity_in_terms_of_EC}, we express the service capacity in terms on $\E{C}$. \Cref{coro:high_traffic_comp} then follows because higher service capacity implies lower latency in the high load regime.

\begin{clm}[Service Capacity in terms of $\E{C}$]
\label{clm:capacity_in_terms_of_EC}
For a system of $n$ servers with a symmetric redundancy policy, and any arrival process with rate $\lambda$, the service capacity, that is, the maximum $\lambda$ such that $\E{T} < \infty$ is
\begin{align}
\lambda_{max} &= \frac{n}{\E{C}} \label{eqn:capacity_in_terms_of_EC}
\end{align}
\end{clm}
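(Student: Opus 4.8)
The plan is to establish \eqref{eqn:capacity_in_terms_of_EC} via a conservation-of-work argument combined with a standard stability dichotomy for work-conserving queueing systems. First I would set up the bookkeeping: over a long time horizon $[0,\tau]$, roughly $\lambda\tau$ jobs arrive, and each job, when processed, generates an expected amount of actual server busy-time equal to $\E{C}$ (this is exactly the definition of computing cost, which counts only time a task spends at the head of its queue being served, not queued-but-later-cancelled time). Hence the total \emph{work} the system must perform over $[0,\tau]$ is $\approx \lambda\tau\,\E{C}$, while the total \emph{service bandwidth} available from $n$ servers over the same horizon is $n\tau$. The symmetry hypothesis (\Cref{defn:symmetric_forking}) is what lets us treat this as a balanced system: the expected task-arrival rate is equal across servers, so no single server is a bottleneck before the aggregate is, and each server individually sees offered load $\lambda\E{C}/n$.

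Next I would argue both directions of the threshold. For $\lambda < n/\E{C}$: each server is a (generalized) $G/G/1$-type queue with load strictly below $1$; I would invoke a Loynes-type stability / rate-conservation argument (or simply dominate the per-server workload by an $M/G/1$ or $G/G/1$ queue whose utilization $\rho = \lambda\E{C}/n < 1$) to conclude the workload process is positive recurrent, hence $\E{T}<\infty$. One subtlety: redundancy couples the servers, so the per-server arrival stream is not renewal and the cancellation events are correlated across servers; I would handle this by upper-bounding the true waiting time by that in a decoupled system where each server independently receives its share of the load and never benefits from cancellations — this can only increase delay, and that dominating system is a standard stable queue when $\rho<1$. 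For $\lambda > n/\E{C}$: the total unfinished work has negative drift is violated — the arrival rate of work $\lambda\E{C}$ exceeds the maximum processing rate $n$ — so the backlog grows without bound and $\E{T}=\infty$; this follows from a straightforward rate comparison (the cumulative required work minus cumulative completed work drifts to $+\infty$ almost surely).

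The main obstacle I anticipate is rigorously justifying that the ``work'' generated per job is exactly $\E{C}$ and that this quantity is what governs stability, despite the fact that under redundancy a job places tasks on multiple servers and the realized busy-time on each server depends on the global state (who gets cancelled when). The clean statement is that $\E{C}$ already accounts for this — it is defined as the expected \emph{total} server time actually spent, post-cancellation — so the aggregate work rate is unambiguously $\lambda\E{C}$; the work is merely distributed across servers in a state-dependent way, but by the symmetry assumption its long-run average is $\E{C}/n$ per server. Making the coupling-and-domination step fully rigorous (rather than heuristic) is where the real care is needed; everything else is the familiar ``$\rho<1$ iff stable'' dichotomy. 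I would therefore structure the proof as: (i) identify offered work rate $=\lambda\E{C}$, capacity $=n$; (ii) sufficiency of $\lambda<n/\E{C}$ via stochastic domination by an independent-servers system with $\rho<1$; (iii) necessity via a drift/rate argument showing backlog explosion when $\lambda>n/\E{C}$; concluding $\lambda_{max}=n/\E{C}$.
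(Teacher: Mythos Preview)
Your approach is essentially the same as the paper's---both identify the per-server utilization as $\rho = \lambda\,\E{C}/n$ and invoke the ``$\rho<1$ iff stable'' dichotomy---but you are being considerably more rigorous than the paper itself. The paper's entire proof is three sentences: it observes that by symmetry the mean time each server spends per job is $\E{C}/n$, computes $\rho = \lambda\E{C}/n$, asserts that stability requires $\rho<1$, and concludes. There is no Loynes construction, no domination argument, and no separate treatment of the two directions; the coupling subtleties you flag are simply not addressed. So your proposal is correct and in the same spirit, just substantially more careful than what the paper actually provides.

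One caveat worth keeping in mind if you pursue the rigorous version: $\E{C}$ as defined in the paper can in principle depend on the system state (through the relative task start times $t_i$), so for policies like uniform-random partial forking $\E{C}$ may itself vary with $\lambda$, and the equation $\lambda_{\max}=n/\E{C}$ becomes implicit. The paper sidesteps this by later bounding $\E{C}$ uniformly in the $t_i$ (\Cref{thm:E_C_r_trend}), but the claim as stated treats $\E{C}$ as a fixed constant, which is cleanest for policies (like group-based random or full fork-join) where tasks start simultaneously.
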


\begin{proof}
For a symmetric policy, the mean time spent by each server per job is $\E{C}/n$. Thus the server utilization is $\rho = \lambda \E{C}/n$. To keep the system stable such that $\E{T} < \infty$, the server utilization must be less than $1$. The result in \eqref{eqn:capacity_in_terms_of_EC} follows from this.
\end{proof}

\begin{coro}
\label{coro:high_traffic_comp}
The symmetric redundancy strategy that results in a lower $\E{C}$, also gives lower $\E{T}$ in the high traffic regime, when $\lambda$ is close to the service capacity.  
\end{coro}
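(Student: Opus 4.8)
The plan is to obtain the corollary as an immediate consequence of \Cref{clm:capacity_in_terms_of_EC} together with two standard monotonicity facts about single-class FCFS queueing systems. Fix two symmetric redundancy strategies $P_1$ and $P_2$ with computing costs $\E{C_1} < \E{C_2}$. By \eqref{eqn:capacity_in_terms_of_EC} their service capacities satisfy $\lambda_{max}^{(1)} = n/\E{C_1} > n/\E{C_2} = \lambda_{max}^{(2)}$, so the cheaper policy remains stable over a strictly larger range of arrival rates. The goal is to show that just below $\lambda_{max}^{(2)}$ the latency of $P_1$ is the smaller of the two.

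First I would observe that, for each fixed policy, $\E{T}$ is non-decreasing in $\lambda$: superimposing extra Poisson arrivals on the input stream can only (weakly) increase every task's waiting time in a sample-path coupling, so raising $\lambda$ cannot lower $\E{T}$. Second, I would use the heavy-traffic fact that $\E{T} \to \infty$ as $\lambda \uparrow \lambda_{max}$; intuitively this is the quantitative companion of \Cref{clm:capacity_in_terms_of_EC}, since the per-server utilization $\rho = \lambda\E{C}/n$ approaches $1$ and the tagged-server backlog grows without bound. Combining the two: for $\lambda$ in a left-neighbourhood $(\lambda_{max}^{(2)}-\epsilon,\ \lambda_{max}^{(2)})$, the latency $\E{T_2}$ under $P_2$ exceeds any prescribed bound once $\epsilon$ is small enough, whereas every such $\lambda$ is bounded away from $\lambda_{max}^{(1)}$, so by monotonicity $\E{T_1}$ under $P_1$ is at most its finite value evaluated at $\lambda = \lambda_{max}^{(2)}$. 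Hence $\E{T_1} < \E{T_2}$ for all $\lambda$ sufficiently close to (and below) $\lambda_{max}^{(2)}$, which is exactly the claimed high-traffic comparison. Since the comparison was between an arbitrary pair of symmetric policies, the strategy with the lower $\E{C}$ is the winner in this regime.

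The step I expect to require the most care is the divergence $\E{T}\to\infty$ as $\lambda\uparrow\lambda_{max}$: \Cref{clm:capacity_in_terms_of_EC} only asserts that $\E{T}$ is finite strictly below capacity, not that it blows up continuously at the boundary, and a priori a monotone function could jump from a finite value to $\infty$ at $\lambda_{max}$. For the models considered here (FCFS queues with i.i.d.\ service times and redundant tasks cancelled on completion or on start of service), ruling this out reduces to a standard busy-period / random-walk argument showing that the workload process at a tagged server is null-recurrent at $\rho=1$; I would invoke this classical result rather than reprove it, since the paper's substantive contribution lies elsewhere and this corollary is used only to motivate comparing policies via $\E{C}$.
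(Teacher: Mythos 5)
Your proposal is correct and follows essentially the same route as the paper, which justifies the corollary in one line via \Cref{clm:capacity_in_terms_of_EC} ("higher service capacity implies lower latency in the high load regime"). You simply make explicit the two standard facts that line relies on --- monotonicity of $\E{T}$ in $\lambda$ and divergence of $\E{T}$ as $\lambda$ approaches capacity --- and correctly flag the divergence step as the only one needing a classical queueing argument.
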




\subsection{Log-concavity of $\bar{F}_X$}
When the tail distribution $\bar{F}_X$ of service time is either `log-concave' or `log-convex', we get clear insights into how redundancy affects latency and cost. Log-concavity of $\bar{F}_X$ is defined formally as follows.

\begin{defn}[Log-concavity and log-convexity of $\bar{F}_X$]
\label{defn:log_concave}
The tail distribution $\bar{F}_X$ is said to be log-concave (log-convex) if $ \log \Pr(X>x)$ is concave (convex) in $x$ for all $x \in [0, \infty)$. 
\end{defn}

For brevity, when we say $X$ is log-concave (log-convex) in this paper, we mean that $\bar{F}_X$ is log-concave (log-convex). An interesting implication of log-concavity is that if $\bar{F}_X$ is log-concave, 
\begin{equation}
\Pr(X>x+t|X>t) \leq \Pr(X>x) \label{eqn:new_better_than_used}
\end{equation}
The inequality is reversed if $\bar{F}_X$ is log-convex\footnote{The definition of the notion `new-better-than-used' considered in \cite{koole_righter_2008} is same as \eqref{eqn:new_better_than_used}. Other names used to refer to new-better-than-used distributions are `light-everywhere' in \cite{shah_when_2013} and `new-longer-than-used' in \cite{sun_shroff}.}. Equality holds for the exponential distribution, which is both log-convex and log-concave. As a result the mean residual life $\E{X-t|X>t}$ decreases (increases) with the elapsed time $t$ if $\bar{F}_X$ is log-concave (log-convex).  

The numerical results in this paper use the shifted exponential, and hyper exponential as examples of log-concave and log-convex distributions respectively. The shifted exponential, denoted by $\SExp(\Delta,\mu)$ is an exponential with rate $\mu$, plus a constant shift $\Delta \geq 0$. The hyper-exponential distribution, denoted by $\HyperExp(\mu_1, \mu_2, p)$. It is a mixture of two exponentials with rates $\mu_1$ and $\mu_2$ where the exponential with rate $\mu_1$ occurs with probability $p$.

If we fork a job to all $r$ idle servers and wait for any $1$ copy to finish, the expected computing cost $\E{C} = r\E{X_{1:r}}$. \Cref{lem:r_E_X_1_r_trend} below gives how $r \E{X_{1:r}}$ varies with $r$ for log-concave (log-convex) $\bar{F}_X$. It is central to proving several key results in this paper. 

\begin{lem}
\label{lem:r_E_X_1_r_trend}
If $X$ is log-concave (log-convex), $r \E{X_{1:r}}$ is non-decreasing (non-increasing) in $r$.
\end{lem}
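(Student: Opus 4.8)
The plan is to collapse $r\E{X_{1:r}}$ into a one-dimensional integral and then read off its monotonicity from the concavity or convexity of the inverse of $-\log\bar{F}_X$. The first step is to reduce to a tail integral: since $X_{1:r}$ is the minimum of $r$ i.i.d.\ copies of $X$, we have $\Pr(X_{1:r}>x)=\bar{F}_X(x)^r$, so
\[
r\E{X_{1:r}} \;=\; r\int_0^\infty \bar{F}_X(x)^r\,dx .
\]
Write $\bar{F}_X(x)=e^{-G(x)}$ with $G(x)=-\log\Pr(X>x)$. Then $G$ is non-negative and non-decreasing with $G(0)=0$ (since $X>0$) and $G(x)\to\infty$ (since $\bar{F}_X(x)\to0$); and by \Cref{defn:log_concave}, $\bar{F}_X$ log-concave is equivalent to $G$ convex, while $\bar{F}_X$ log-convex is equivalent to $G$ concave.

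Next I would change variables through $h:=G^{-1}$, the (generalized) inverse of $G$, which is non-decreasing with $h(0)=0$; the key point is that $G$ convex $\Leftrightarrow$ $h$ concave and $G$ concave $\Leftrightarrow$ $h$ convex. Substituting $u=G(x)$, i.e.\ $x=h(u)$, and then $v=ru$ gives
\[
r\E{X_{1:r}} \;=\; r\int_0^\infty e^{-ru}\,h'(u)\,du \;=\; \int_0^\infty e^{-v}\,h'(v/r)\,dv .
\]
(As a sanity check, for the exponential distribution $G(x)=\mu x$, $h'\equiv 1/\mu$, and the right-hand side is $1/\mu=r\E{X_{1:r}}$, which is constant in $r$, consistent with the boundary case.) Now fix $v>0$. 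If $\bar{F}_X$ is log-concave then $h$ is concave, so $h'$ is non-increasing; since $v/r$ decreases as $r$ grows, $r\mapsto h'(v/r)$ is non-decreasing, hence so is the integral, i.e.\ $r\E{X_{1:r}}$ is non-decreasing in $r$. If $\bar{F}_X$ is log-convex then $h$ is convex, $h'$ is non-decreasing, $r\mapsto h'(v/r)$ is non-increasing, and $r\E{X_{1:r}}$ is non-increasing. That is exactly the claim.

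I expect the main obstacle to be regularity: a log-concave (log-convex) $\bar{F}_X$ need not be smooth or strictly decreasing, so $G$ may have flat stretches or kinks, $h$ may fail to be differentiable at countably many points, and $X$ may have bounded support (in which case $G\equiv\infty$ past the right endpoint of the support and $h$ is bounded). I would handle this via standard facts about convex/concave functions on an interval — locally Lipschitz and absolutely continuous on the interior, with monotone one-sided derivatives everywhere and an ordinary derivative a.e. — which justify the substitution in the second step for absolutely continuous $h$ and make the a.e.-defined $h'$ harmless, since $\{v:\ h'(v/r)\ \text{undefined}\}$ is null for each $r$. Alternatively, one proves the inequality for a smooth, strictly decreasing approximation of $\bar{F}_X$ and passes to the limit by monotone convergence.

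A second, essentially equivalent route is to differentiate $\phi(r):=r\int_0^\infty\bar{F}_X(x)^r\,dx$ directly: one gets $\phi'(r)=\frac{1}{r}\int_0^\infty e^{-v}(1-v)\,h'(v/r)\,dv$, and since $\int_0^\infty e^{-v}(1-v)\,dv=0$ we may subtract $h'(1/r)$ inside to obtain $\phi'(r)=\frac{1}{r}\int_0^\infty e^{-v}(1-v)\big(h'(v/r)-h'(1/r)\big)\,dv$; monotonicity of $h'$ makes the integrand pointwise non-negative (log-concave case) or non-positive (log-convex case) on both $(0,1)$ and $(1,\infty)$, pinning down the sign of $\phi'$. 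This carries the same regularity caveats, and the change-of-variables argument above seems the cleaner one to write up.
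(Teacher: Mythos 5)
The paper itself omits the proof of \Cref{lem:r_E_X_1_r_trend}, deferring it to the extended version, so a direct comparison is not possible; I can only assess your argument on its own terms. The core of it is sound: $\E{X_{1:r}}=\int_0^\infty \bar{F}_X(x)^r\,dx$, the equivalence between log-concavity of $\bar F_X$ and convexity of $G=-\log\bar F_X$, the fact that the inverse of an increasing convex (concave) function is concave (convex), and the resulting pointwise monotonicity of $v\mapsto h'(v/r)$ in $r$ are all correct, and your alternative computation of $\phi'(r)$ also checks out. The regularity issues you flag (one-sided derivatives, a.e.\ differentiability, bounded support) are indeed handled by standard facts about convex functions.

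There is, however, one concrete slip that is worth fixing because it bites on the paper's canonical log-concave example. If $m:=\operatorname{ess\,inf}X>0$ (e.g.\ $X\sim\SExp(\Delta,\mu)$ with $m=\Delta$), then $G\equiv 0$ on $[0,m]$, so the substitution $u=G(x)$ collapses $[0,m]$ to the single point $u=0$ and the correct identity is
\begin{align}
r\,\E{X_{1:r}} \;=\; r\,m \;+\; \int_0^\infty e^{-v}\,h'(v/r)\,dv,
\end{align}
where $h$ inverts $G$ on $(m,\infty)$. Your stated formula drops the $r\,m$ term; for the shifted exponential it would give $1/\mu$ instead of $r\Delta+1/\mu$. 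This is not a genuine mathematical obstruction: the extra term $r\,m$ is non-decreasing in $r$, so it only reinforces the log-concave conclusion, and log-convexity of $\bar F_X$ together with $\bar F_X(0)=1$ forces $m=0$ (a convex $\log\bar F_X$ that vanishes on $[0,\Delta]$ and is negative beyond cannot exist), so the term is absent in that case. You should also note briefly that interior flat stretches of $G$ cannot occur for proper log-concave or log-convex distributions, so $h$ is a genuine inverse on $(m,\infty)$. With these two sentences added, the proof is complete.
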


The proof of \Cref{lem:r_E_X_1_r_trend} is omitted here and can be found in the extended version \cite{gauri_tompecs_arxiv_2015}. 
We refer readers to \cite{log_concave} for other properties and examples of log-concave distributions.

\subsection{Relative Task Start Times}
\label{subsec:task_start_times}

The relative start times of the $n$ tasks of a job is an important factor affecting the latency and cost. Let the relative task start times  be $t_1 \leq t_2 \leq \cdots t_n$ where $t_1 = 0$ without loss of generality and $t_i$ for $i > 1$ are measured from the instant when the earliest task starts service. For instance, if $n=3$ tasks start at times $3$, $4$ and $7$, then $t_1 = 0$, $t_2 = 4-3 = 1$ and $t_3 = 7-3 = 4$ respectively. In the case of partial forking when only $r$ tasks are invoked, we can consider $t_{r+1}, \cdots t_n$ to be $\infty$. 

Let $S$ be the time from when the earliest task starts service, until any one task finishes. Thus it is minimum of $X_1+t_1 , X_2 +t_2, \cdots X_n + t_n$, where $X_i$ are i.i.d. with distribution $F_X$. The tail distribution of $S$ is given by
\begin{align}
\Pr(S> s) &= \prod_{i=1}^{n} \Pr(X > s - t_n) \label{eqn:S_tail_dist}
\end{align}

The computing cost $C$ is given by,
\begin{align}
C &= S + \posfunc{ S - t_2} + \cdots + \posfunc{S - t_n}. \label{eqn:C_expr}
\end{align}

The relative task start times $t_i$ affect $C$ in two opposing ways. The negative part of each term in \eqref{eqn:C_expr} increases with $t_i$, but the expected value of $S$ increases \eqref{eqn:S_tail_dist} with $t_i$. By analyzing \eqref{eqn:C_expr} we get several crucial insights in the rest of the paper. For instance, in Section~\ref{sec:partial_fork} we show that when $\bar{F}_X$ is log-convex, having $t_1 = t_2 = \cdots = t_n = 0$ gives the lowest $\E{C}$. Then using \Cref{clm:capacity_in_terms_of_EC} we can infer that it is optimal to fork a job to all $n$ servers when $\bar{F}_X$ is log-convex. 


\section{$(n,1)$ system with and without early cancellation} 
\label{sec:rep_with_queueing}
\pdfoutput=1
We now analyze the latency and cost of the $(n,1)$ fork-join and $(n,1)$ fork-early-cancel systems defined in Section~\ref{sec:prob_setup}. We get the insight that it is better to cancel redundant tasks early if $\bar{F}_X$ is log-concave and traffic is high. But if $\bar{F}_X$ is log-convex, retaining the redundant tasks is always better.


\begin{figure}[t]
\centering
\includegraphics[width=3.3in]{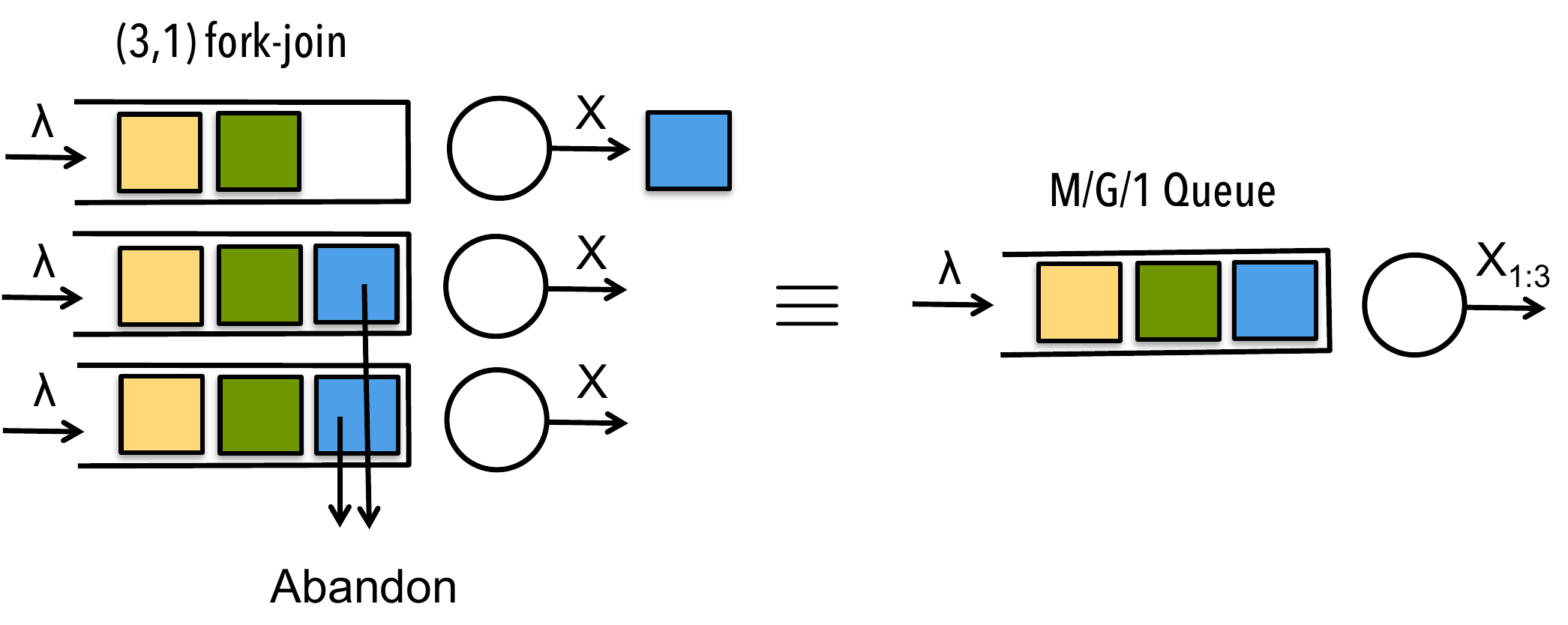}
\caption{Equivalence of the $(n,1)$ fork-join system with an $M/G/1$ queue with service time $X_{1:n}$, the minimum of $n$ i.i.d.\ random variables $X_1, X_2, \dots , X_n$. \label{fig:fork_join_mg1_eq}}
\vspace{-0.5cm}
\end{figure}

\subsection{Latency-Cost Analysis}

\begin{lem}
\label{lem:mg1_eq}
The latency $T$ of the $(n,1)$ fork-join system is equivalent in distribution to that of an $M/G/1$ queue with service time $X_{1:n}$. 
\end{lem}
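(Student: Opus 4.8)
The plan is to exploit the fact that in the $(n,1)$ fork-join system every job is forked to \emph{all} $n$ queues at the same instant and every server is FCFS, so the $n$ queues stay perfectly synchronized for all time; the whole system then collapses to a single FCFS queue whose per-job service requirement is the minimum of $n$ fresh i.i.d.\ copies of $X$.

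First I would establish, by induction on the sequence of arrival and departure events, the invariant: at every time instant all $n$ queues contain exactly the same jobs in the same order, and either every server is idle or every server is simultaneously in service on the (common) head-of-line job, having begun that service at the same time. The base case is the empty system. At an arrival, the job is appended to the tail of all $n$ queues simultaneously, preserving the common order, and if the system was empty all $n$ servers start this job together. At a departure, the first of the job's $n$ tasks to finish triggers immediate cancellation of its $n-1$ siblings, so all servers free up at the same instant, and by the ordering part of the invariant they all see the same next job and start it together. A key point is that a cancelled task is removed entirely and leaves no residual work; since the model realizes a task's service time only when it starts service, the next time the servers begin a job they draw \emph{fresh}, independent service times $X_1,\dots,X_n$.

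Given the invariant, let $A_1 < A_2 < \cdots$ be the (Poisson, rate $\lambda$) arrival epochs, and let job $j$'s effective service time be $Y_j := X_{1:n}^{(j)}$, so that $Y_1,Y_2,\dots$ are i.i.d.\ $\sim X_{1:n}$ and independent of the arrivals. By the invariant, job $j$ enters service at all servers at time $\max(A_j, D_{j-1})$ (with $D_0 := 0$) and departs at
\begin{equation}
D_j = \max(A_j, D_{j-1}) + Y_j ,
\end{equation}
which is exactly the Lindley recursion for the departure epochs of a single-server FCFS $M/G/1$ queue fed by the same Poisson process and with service-time distribution that of $X_{1:n}$. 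Hence the sojourn time $T_j = D_j - A_j$ of each job, and therefore the stationary latency $T$, has the same distribution as the response time in that $M/G/1$ queue.

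The main obstacle is purely in making the synchronization invariant airtight: verifying that the cancellation step really leaves every server idle simultaneously with no carried-over work, and handling the measure-zero event of simultaneous task completions (for continuous $X$ this has probability zero and the tie-break is irrelevant here, unlike in the fork-early-cancel variant). Once the invariant is in place, the reduction to the Lindley recursion, and thus to $M/G/1$ with service time $X_{1:n}$, is immediate.
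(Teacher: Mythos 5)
Your proposal is correct and follows essentially the same route as the paper: the paper's proof also observes that all $n$ queues remain synchronized (tasks of every job start service simultaneously), so the system reduces to a single FCFS queue with effective service time $X_{1:n}$. Your version simply makes the synchronization invariant and the resulting Lindley recursion explicit, which the paper leaves informal.
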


\begin{proof}
Consider the first job that arrives to a $(n,1)$ fork-join system when all servers are idle. Thus, the $n$ tasks of this job start service at their respective servers simultaneously. The earliest task finishes after time $X_{1:n}$, and all other tasks are immediately. So, the tasks of all subsequent jobs arriving to the system also start simultaneously at the $n$ servers as illustrated in Fig.~\ref{fig:fork_join_mg1_eq}. Hence, arrival and departure events, and the latency of an $(n,1)$ fork-join system is equivalent in distribution to an $M/G/1$ queue with service time $X_{1:n}$.
\end{proof}

\begin{thm}
\label{thm:rep_queueing}
The expected latency and computing cost of an $(n,1)$ fork-join system are given by
\begin{align}
\E{T} &= \E{T^{M/G/1}} = \E{X_{1:n}} + \frac{\lambda \E{X_{1:n}^2}}{2(1 - \lambda \E{X_{1:n}})} \label{eqn:E_T_rep_queueing} \\
\E{C} &= n \cdot \E{X_{1:n}} \label{eqn:E_C_rep_queueing} 
\end{align}
where $X_{1:n} = \min (X_1, X_2, \dots , X_n)$ for i.i.d.\ $X_i \sim F_X$.
\end{thm}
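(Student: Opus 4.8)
The plan is to prove the two claims separately, leaning on \Cref{lem:mg1_eq} for the latency and on the synchronization argument inside its proof for the cost.

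For $\E{T}$, I would argue as follows. \Cref{lem:mg1_eq} already gives that the response time $T$ of the $(n,1)$ fork-join system is equal in distribution to the response time of an $M/G/1$ queue whose generic service requirement is $X_{1:n}=\min(X_1,\dots,X_n)$. It then remains only to invoke the Pollaczek--Khinchine mean-value formula: for an $M/G/1$ queue fed by a Poisson process of rate $\lambda$ with generic service time $Y$ satisfying $\lambda\E{Y}<1$, the mean sojourn time equals $\E{Y}+\lambda\E{Y^2}/\bigl(2(1-\lambda\E{Y})\bigr)$. Substituting $Y=X_{1:n}$ yields \eqref{eqn:E_T_rep_queueing}, with the stability condition $\lambda\E{X_{1:n}}<1$ being exactly the regime in which the expression is finite. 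The derivation of the P--K formula itself (via PASTA and the mean workload seen by an arriving customer, or via a busy-period decomposition) is entirely standard, so I would cite it rather than reproduce it; the only thing worth stating explicitly is that the Poisson assumption on arrivals is used here.

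For $\E{C}$, I would use the observation, already established in the proof of \Cref{lem:mg1_eq}, that in the $(n,1)$ fork-join system the $n$ tasks of every job enter service simultaneously at the $n$ servers; in the notation of \Cref{subsec:task_start_times} this is the statement $t_1=t_2=\cdots=t_n=0$. Hence, measured from the common instant at which a job's tasks begin service, the first completion occurs after $S=X_{1:n}$, and at that moment all $n$ servers have been continuously serving a copy of this job and each therefore contributes exactly $X_{1:n}$ to the computing cost before the redundant copies are cancelled. Formally, plugging $t_i=0$ into \eqref{eqn:C_expr} gives $C=S+\posfunc{S-t_2}+\cdots+\posfunc{S-t_n}=nS=nX_{1:n}$, and taking expectations yields \eqref{eqn:E_C_rep_queueing}. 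This part needs no distributional or arrival-process assumption beyond the definition of the system.

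The main obstacle, to the extent there is one, is in the latency part: one must be sure that the equivalence in \Cref{lem:mg1_eq} is an equivalence of the entire arrival/departure process (so that the P--G formula applies verbatim to the stationary response time, not merely to a single isolated job) and that the Poisson input is carried through, since that is precisely where it enters. The cost identity, by contrast, is a one-line consequence of the synchronization $t_i\equiv 0$.
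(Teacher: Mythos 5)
Your proposal is correct and follows exactly the paper's route: invoke \Cref{lem:mg1_eq} together with the Pollaczek--Khinchine formula for \eqref{eqn:E_T_rep_queueing}, and use the simultaneous task start times ($t_i=0$, so $C=nX_{1:n}$ via \eqref{eqn:C_expr}) for \eqref{eqn:E_C_rep_queueing}. No gaps; your added remark about verifying that the equivalence holds at the level of the whole arrival/departure process is a fair point that the paper's proof of \Cref{lem:mg1_eq} already addresses.
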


\begin{proof}
By \Cref{lem:mg1_eq}, the latency of the $(n,1)$ fork-join system is equivalent in distribution to an $M/G/1$ queue with service time $X_{1:n}$. The expected latency of an $M/G/1$ queue is given by the Pollaczek-Khinchine formula \eqref{eqn:E_T_rep_queueing}. The expected cost $\E{C} = n \E{X_{1:n}}$ because each of the $n$ servers spends $X_{1:n}$ time on the job. This can also be seen by noting that $S = X_{1:n}$ when $t_i = 0$ for all $i$, and thus $C = n X_{1:n}$ in \eqref{eqn:C_expr} .
\end{proof}

In \Cref{coro:rep_queueing_ET} and \Cref{coro:rep_queueing_EC} we characterize how $\E{T}$ and $\E{C}$ vary with $n$. The behavior of $\E{C}$ follows from \Cref{lem:r_E_X_1_r_trend}.

\begin{coro}
\label{coro:rep_queueing_ET}
For the $(n,1)$ fork-join system with any service distribution $F_X$, the expected latency $\E{T}$ is non-increasing with $n$. 
\end{coro}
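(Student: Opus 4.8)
The plan is to read the monotonicity off directly from the Pollaczek--Khinchine expression in \Cref{thm:rep_queueing}, using the elementary fact that the minimum of i.i.d.\ samples is stochastically decreasing in the number of samples. Concretely, I would couple the $n$-server and $(n+1)$-server systems on a common probability space by reusing the same $X_1,\dots,X_n$ and adjoining one more i.i.d.\ copy $X_{n+1}$; then $X_{1:n+1} = \min(X_{1:n},X_{n+1}) \le X_{1:n}$ pointwise. Hence for every non-decreasing nonnegative function $g$ the expectation $\E{g(X_{1:n})}$ is non-increasing in $n$, and in particular both $\E{X_{1:n}}$ and $\E{X_{1:n}^2}$ are non-increasing in $n$ (the second moment because $x\mapsto x^2$ is non-decreasing on $[0,\infty)$ and $X>0$). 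This step needs no assumption on $F_X$, which matches the generality of the statement.

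Next I would substitute this into \eqref{eqn:E_T_rep_queueing}. Writing $a_n=\E{X_{1:n}}$ and $b_n=\E{X_{1:n}^2}$, the formula reads $\E{T} = a_n + \frac{\lambda b_n}{2(1-\lambda a_n)}$ on the stable regime $\lambda a_n<1$. The first term $a_n$ is non-increasing in $n$ by the previous paragraph. For the second term, the numerator $\lambda b_n$ is nonnegative and non-increasing in $n$, while the denominator $2(1-\lambda a_n)$ is positive and non-decreasing in $n$ because $a_n$ is non-increasing; therefore the ratio is non-increasing in $n$. A sum of two non-increasing quantities is non-increasing, which establishes the claim whenever the system is stable for the smaller value of $n$.

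Finally I would dispose of the stability boundary. If $\lambda a_n\ge 1$ for some $n$, the $M/G/1$ queue with service time $X_{1:n}$ is not stable and $\E{T}=\infty$ for that $n$ (consistent with \Cref{lem:mg1_eq}); since $a_n$ is non-increasing, once $\lambda a_n<1$ holds it persists for all larger $n$, and $\infty$ dominates every finite value, so $\E{T}$ remains non-increasing across the threshold. I do not anticipate a genuine obstacle here: the only point requiring care is phrasing the argument so that it also covers the transition from an unstable ($\E{T}=\infty$) to a stable system, and using the monotone coupling for $X_{1:n}$ rather than a direct order-statistics computation, which keeps the argument assumption-free. An alternative to the second step would be to observe that $\E{T}$, viewed as a function of $(a,b)$, has nonnegative partial derivatives on $\{\lambda a<1\}$ and then invoke monotonicity of $(a_n,b_n)$, but the coupling route avoids that smoothness bookkeeping.
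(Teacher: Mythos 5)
Your proof is correct and follows the route the paper intends: the paper states this corollary without an explicit proof, but it is meant to be read off the Pollaczek--Khinchine expression in \Cref{thm:rep_queueing} together with the fact that $X_{1:n}$ is stochastically non-increasing in $n$, which is exactly your coupling argument. Your additional care about the second moment, the denominator $1-\lambda\E{X_{1:n}}$, and the unstable-to-stable transition is sound and fills in the details the paper leaves implicit.
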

 
\begin{coro}
\label{coro:rep_queueing_EC}
If $\bar{F}_X$ is log-concave (log-convex), then $\E{C}$ is non-decreasing (non-increasing) in $n$.
\end{coro}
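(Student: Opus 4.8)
The plan is to obtain the corollary as an immediate consequence of \Cref{thm:rep_queueing} and \Cref{lem:r_E_X_1_r_trend}. By \eqref{eqn:E_C_rep_queueing}, the expected computing cost of the $(n,1)$ fork-join system is $\E{C}=n\,\E{X_{1:n}}$, which is exactly the quantity $r\,\E{X_{1:r}}$ appearing in \Cref{lem:r_E_X_1_r_trend} evaluated at $r=n$. Applying that lemma directly then finishes the argument: if $\bar F_X$ is log-concave, $r\mapsto r\,\E{X_{1:r}}$ is non-decreasing, so $\E{C}$ is non-decreasing in $n$; if $\bar F_X$ is log-convex, the same quantity is non-increasing, so $\E{C}$ is non-increasing in $n$. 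Unlike \Cref{coro:rep_queueing_ET}, nothing about the queue or the arrival rate enters here, since $\E{C}$ is a property of the service distribution alone.

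Since all the content of the corollary is inherited from \Cref{lem:r_E_X_1_r_trend} --- whose proof the excerpt defers to the extended version --- the only part worth sketching is how that lemma is established. I would start from the tail-integral representation $\E{X_{1:r}}=\int_0^\infty \bar F_X(x)^r\,dx$, so that $r\,\E{X_{1:r}}=\int_0^\infty r\,\bar F_X(x)^r\,dx$, and compare consecutive values through the sign of $\int_0^\infty \bar F_X(x)^r\bigl((r+1)\bar F_X(x)-r\bigr)\,dx$. The integrand is positive where $\bar F_X(x)>r/(r+1)$, i.e.\ for small $x$, and negative where $\bar F_X(x)<r/(r+1)$, i.e.\ for large $x$, so the claim reduces to showing that log-concavity of $\bar F_X$ forces the positive contribution to dominate, while log-convexity forces the negative one to dominate. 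A clean way to make this precise is the substitution $u=\bar F_X(x)$, which rewrites $r\,\E{X_{1:r}}$ as a weighted average of the reciprocal density $1/f_X(\bar F_X^{-1}(u))$ against $r u^{r-1}\,du$ on $[0,1]$ and turns the log-concavity hypothesis into monotonicity of that weight; a single-crossing comparison of the weightings for $r$ and $r+1$ then yields the inequality.

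The main difficulty on that self-contained route is purely technical: accommodating service distributions that lack a density, have bounded support, or have a non-smooth quantile function, so that the change of variables must be justified carefully (or replaced by a direct Abel-type rearrangement of $\int_0^\infty \bar F_X(x)^r\bigl((r+1)\bar F_X(x)-r\bigr)\,dx$). For the corollary as stated, however, all of this is absorbed into \Cref{lem:r_E_X_1_r_trend}, and the proof collapses to substituting $r=n$ in \eqref{eqn:E_C_rep_queueing}.
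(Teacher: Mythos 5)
Your proof is correct and matches the paper's: the corollary is obtained exactly by substituting $\E{C}=n\,\E{X_{1:n}}$ from \eqref{eqn:E_C_rep_queueing} into \Cref{lem:r_E_X_1_r_trend} with $r=n$. Your additional sketch of how \Cref{lem:r_E_X_1_r_trend} itself would be proved goes beyond what the paper does here (it defers that lemma to the extended version), but it is not needed for the corollary as stated.
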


Fig.~\ref{fig:ET_vs_EC_rep_shifted_exp_var_n} and Fig.~\ref{fig:ET_vs_EC_rep_hyper_exp_var_n} show the expected latency versus cost for log-concave and log-convex $\bar{F}_X$, respectively. In Fig.~\ref{fig:ET_vs_EC_rep_shifted_exp_var_n}, the arrival rate $\lambda = 0.25$, and $X$ is shifted exponential $\SExp(\Delta, 0.5)$, with different values of $\Delta$. For $\Delta > 0$, there is a trade-off between expected latency and cost. Only when $\Delta = 0$, that is, $X$ is a pure exponential (which is generally not true in practice), we can reduce latency without any additional cost. In Fig.~\ref{fig:ET_vs_EC_rep_hyper_exp_var_n}, arrival rate $\lambda = 0.5$, and $X$ is hyperexponential $\HyperExp(0.4, 0.5, \mu_2)$ with different values of $\mu_2$. We get a simultaneous reduction in $\E{T}$ and $\E{C}$ as $n$ increases. The cost reduction is steeper as $\mu_2$ increases.

\begin{figure}[t]
        \centering
      \includegraphics[width=3.2in]{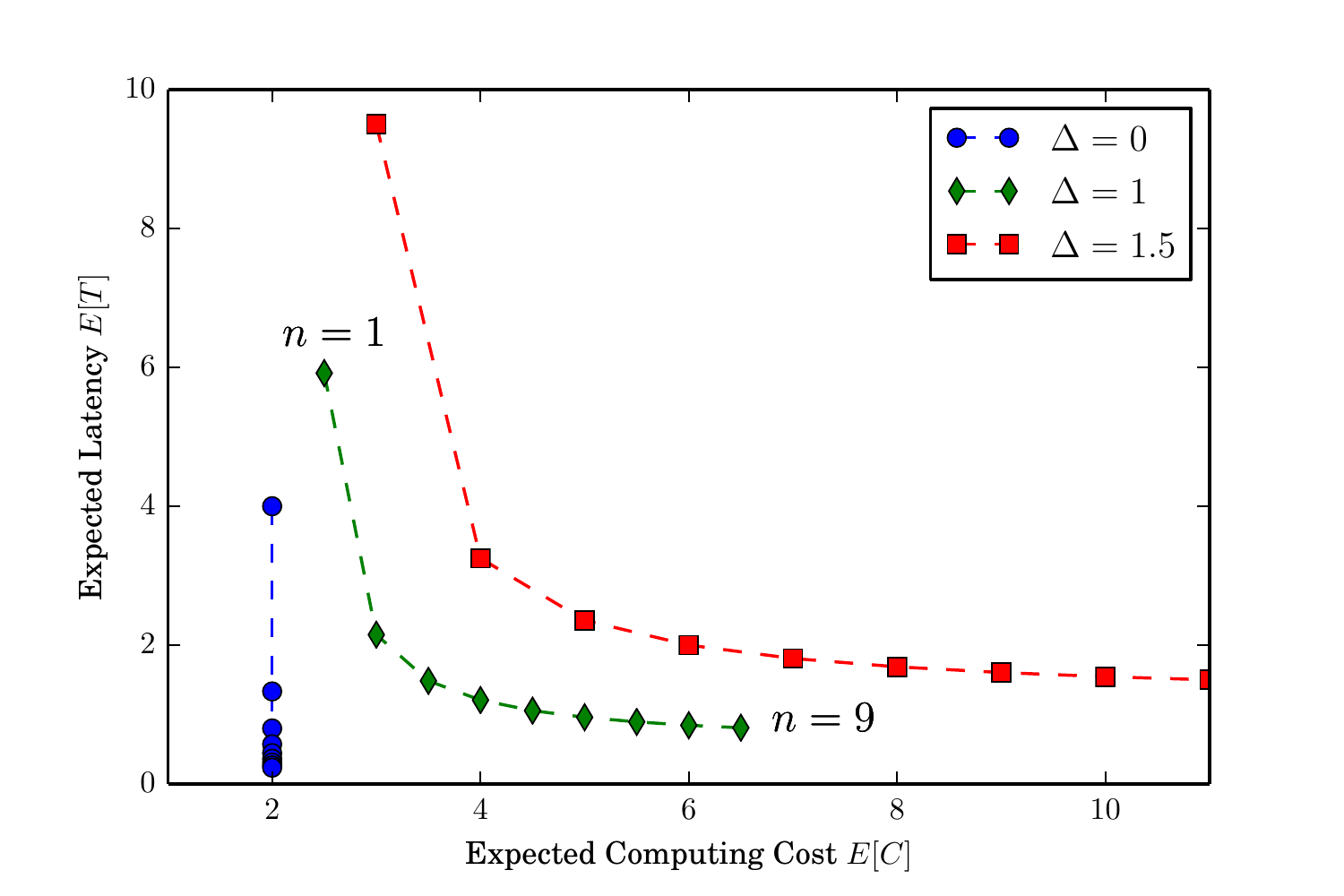}
      \caption{ The service time $X \sim \Delta + \text{Exp}(\mu)$ (log-concave), with $\mu = 0.5$, $\lambda = 0.25$. As $n$ increases along each curve, $\E{T}$ decreases and $\E{C}$ increases. Only when $\Delta =0$, latency reduces at no additional cost. \label{fig:ET_vs_EC_rep_shifted_exp_var_n}}
 \vspace{-0.2cm}
\end{figure}

\begin{figure}
\centering
\includegraphics[width=3.2in]{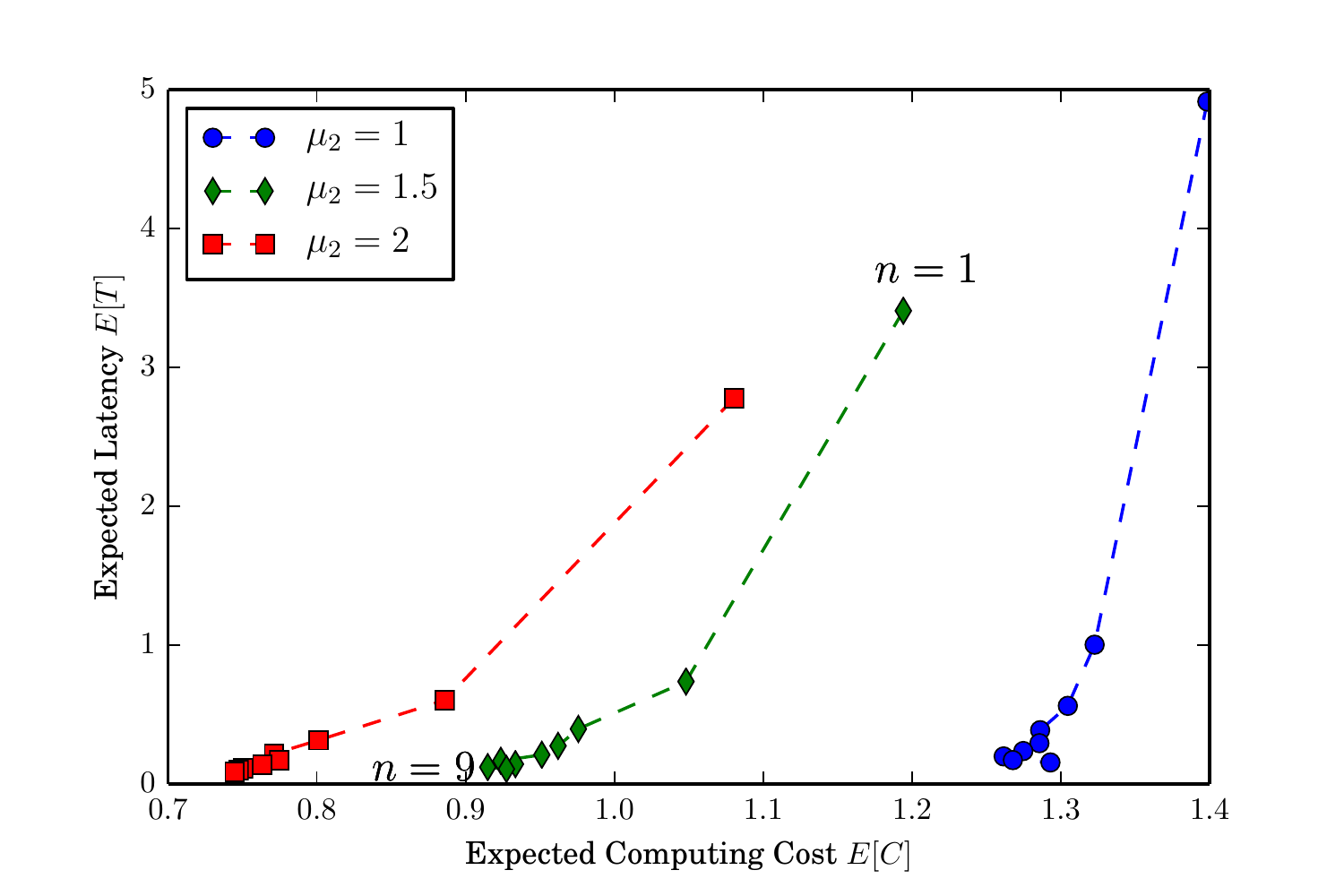}
\caption{ The service time $X \sim \HyperExp(0.4, \mu_1, \mu_2)$ (log-convex), with $\mu_1 = 0.5$, different values of $\mu_2$, and $\lambda = 0.5$.  Expected latency and cost both reduce as $n$ increases along each curve.\label{fig:ET_vs_EC_rep_hyper_exp_var_n}}
 \vspace{-0.2cm}
\end{figure}

\subsection{Early Task Cancellation}
\label{subsec:rep_early_cancel}
We now analyze the $(n, 1)$ fork-early-cancel system, where we cancel redundant tasks as soon as any task reaches the head of its queue. Intuitively, early cancellation can save computing cost, but the latency could increase due to the loss of diversity advantage provided by retaining redundant tasks. Comparing it to $(n,1)$ fork-join system, we gain the insight that early cancellation is better when $\bar{F}_X$ is log-concave, but ineffective for log-convex $\bar{F}_X$.

\begin{thm}
\label{thm:rep_queueing_early_cancel}
The expected latency and cost of the $(n,1)$ fork-early-cancel system are given by
\begin{align}
\E{T} &=  \E{T^{M/G/n}}, \\
\E{C} &= \E{X}, \label{eqn:rep_case_cost_early_cancel} 
\end{align}
where $T^{M/G/n}$ is the response time of an $M/G/n$ queueing system with service time $X \sim F_X$. 
\end{thm}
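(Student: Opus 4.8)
The plan is to mirror the argument of \Cref{lem:mg1_eq}: establish a sample-path equivalence between the $(n,1)$ fork-early-cancel system and an $M/G/n$ FCFS queue with service time $X$, and then read off the cost directly.

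First I would set up the sample-path coupling. Start both systems empty, feed them the same Poisson arrival stream, and give the $j$-th job to enter service in the $M/G/n$ queue the (fresh) service time $X$ of whichever copy of job $j$ survives in the fork-early-cancel system. I then argue by induction on the ordered sequence of arrival and departure epochs that the two systems have identical occupancy: at every instant, in the fork-early-cancel system each busy server holds exactly one job's copy in service, every job that has arrived but not departed and is not in service has exactly one live copy at the same position in each of the $n$ FCFS queues, and these waiting copies appear in arrival order in all $n$ queues. The observations that keep the induction running are: (i) FCFS within each queue means an earlier job's copy is ahead of a later job's copy in every queue, so the earliest waiting job is the one that reaches the head of whichever server frees up first — hence service starts in FCFS order, exactly as in $M/G/n$; (ii) when a job's copy starts service its copies elsewhere are removed at once, so the task at the head of any idle server's queue is always a live copy, never a stale one; and (iii) when several of a job's copies would start simultaneously (an idle-server tie, or two servers freeing at the same instant, a probability-zero event for continuous $F_X$), the uniform random choice merely selects which of the identical servers does the work, which does not change the response-time law. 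This yields $T \stackrel{d}{=} T^{M/G/n}$, hence $\E{T} = \E{T^{M/G/n}}$.

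For the cost, the same picture gives the result immediately: each job is ultimately served by exactly one server for a duration distributed as $X$ — the task service clock starts fresh when the surviving copy reaches the head of its queue, independent of how long it waited, by the i.i.d.\ assumption on task service times — while every other copy is canceled strictly before it reaches the head of its queue and therefore, per the cost convention in \Cref{subsec:perf_metrics}, contributes zero. Thus $C = X$ on every sample path, and \eqref{eqn:rep_case_cost_early_cancel} follows.

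The main obstacle is the bookkeeping in the induction: making rigorous that no later job ever overtakes an earlier one in any queue and that canceled copies are excised cleanly, so that ``the task at the head of an idle server's queue'' is always well-defined and live. Once that invariant is maintained across arrival and departure epochs, the equivalence with $M/G/n$ is forced. The tie-breaking clause in the definition of the fork-early-cancel system is exactly what is needed to handle the coincident-start corner cases, and for continuous service distributions these occur with probability zero anyway.
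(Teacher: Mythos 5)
Your argument is correct and follows essentially the same route as the paper: the paper's proof also rests on the observation that the redundant copies merely locate the first idle server, so the system is equivalent in distribution to an $M/G/n$ FCFS queue and exactly one copy per job incurs a service time $X$. You simply make explicit, via the sample-path coupling and the FCFS-order invariant, what the paper states informally.
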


\begin{proof}
In the $(n,1)$ fork-early-cancel system, when any one tasks reaches the head of its queue, all others are canceled immediately. The redundant tasks help find the shortest queue, and exactly one task of each job is served by the first server that becomes idle. Thus, as illustrated in Fig.~\ref{fig:fork_early_mgn_eq}, the latency of the $(n,1)$ fork-early-cancel system is equivalent in distribution to an $M/G/n$ queue. Hence $\E{T} = \E{T^{M/G/n}}$ and $\E{C} = \E{X}$.
\end{proof}

\begin{figure}[t]
\centering
\includegraphics[width=3.3in]{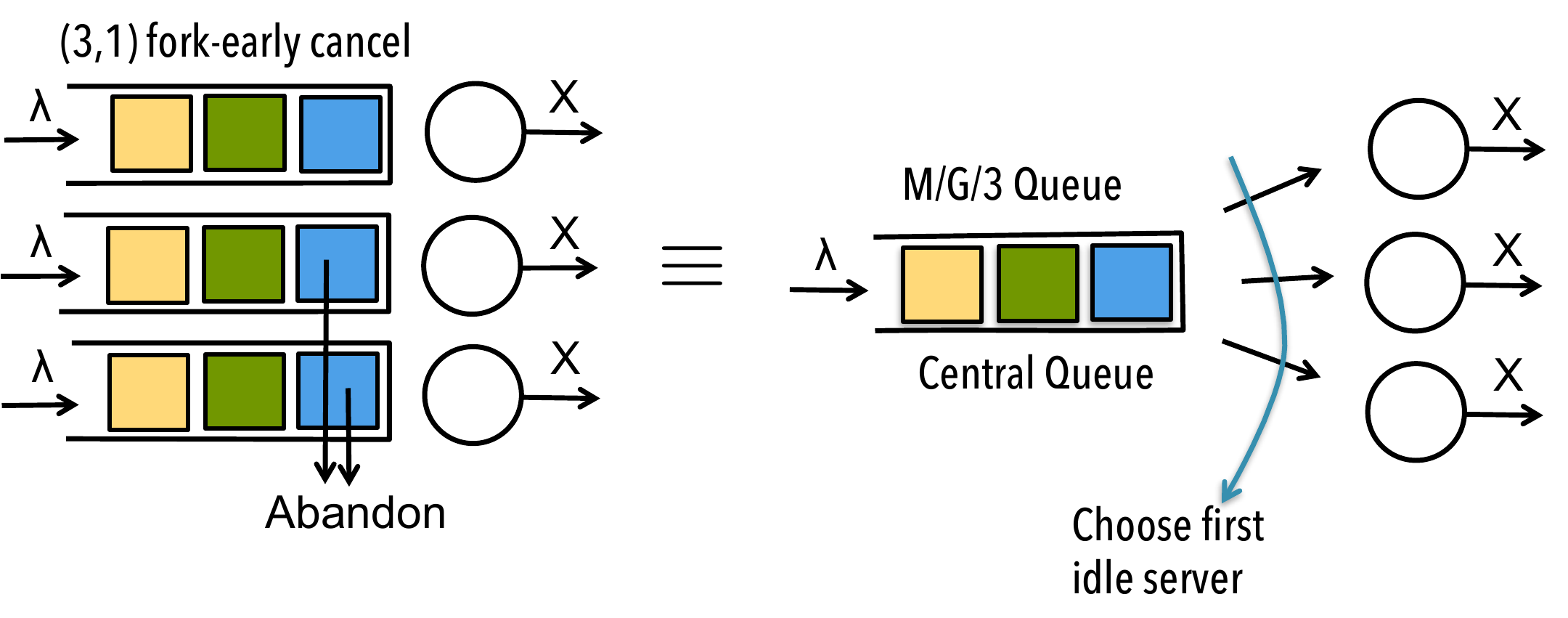}
\caption{Equivalence of the $(n,1)$ fork-early cancel system to an $M/G/n$ queue with each server taking time $X \sim F_X$ to serve task, i.i.d.\ across servers and tasks. \label{fig:fork_early_mgn_eq}}
 \vspace{-0.25cm}
\end{figure}

The exact analysis of mean response time $\E{T^{M/G/n}}$ has long been an open problem in queueing theory. A well-known approximation given by \cite{lee_loughton} is,
\begin{align}
 \E{T^{M/G/n}} \approx \E{X} +  \frac{ \E{X^2}}{2 \E{X}^2}\E{W^{M/M/n}}  \label{eqn:rep_case_latency_early_cancel}  
\end{align}
where $\E{W^{M/M/n}}$ is the expected waiting time in an $M/M/n$ queueing system with load $\rho = \lambda \E{X}/n$. It can be evaluated using the Erlang-C model \cite[Chapter~14]{mor_book}. 
%
We now compare the latency and cost with and without early cancellation given by Theorem~\ref{thm:rep_queueing_early_cancel} and Theorem~\ref{thm:rep_queueing}. \Cref{coro:early_cancel_E_C_trend} below follows from \Cref{lem:r_E_X_1_r_trend}.


\begin{coro}
\label{coro:early_cancel_E_C_trend}
If $\bar{F}_X$ is log-concave (log-convex), then $\E{C}$ of the $(n,1)$ fork-early-cancel system is greater than equal to (less than or equal to) that of $(n,1)$ fork-join system. 
\end{coro}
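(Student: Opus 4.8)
The plan is to reduce the corollary to a single comparison between two already-computed cost expressions and then read off the answer from the monotonicity in \Cref{lem:r_E_X_1_r_trend}. By \Cref{thm:rep_queueing_early_cancel} the $(n,1)$ fork-early-cancel system has computing cost $\E{C} = \E{X}$, and by \Cref{thm:rep_queueing} the $(n,1)$ fork-join system has $\E{C} = n\E{X_{1:n}}$. Since $\E{X} = \E{X_{1:1}} = 1\cdot\E{X_{1:1}}$, both quantities are values of the single function $g(r) := r\,\E{X_{1:r}}$, namely $g(1)$ for fork-early-cancel and $g(n)$ for fork-join. Thus the entire corollary is nothing more than a comparison of $g$ at its two endpoints $r=1$ and $r=n$.

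First I would write the two costs explicitly as $g(1)$ and $g(n)$ to make the reduction transparent, being careful to match each endpoint to the correct system ($r=1$, a single surviving replica, for fork-early-cancel; $r=n$, all $n$ replicas retained until the first completes, for fork-join). Then I would invoke \Cref{lem:r_E_X_1_r_trend}, which establishes that $g(r)=r\,\E{X_{1:r}}$ is monotone in $r$ in each regime, in one direction when $\bar{F}_X$ is log-concave and in the opposite direction when $\bar{F}_X$ is log-convex. Applying this monotonicity at the endpoints $r=1$ and $r=n$ orders $g(1)$ against $g(n)$, and substituting back $g(1)=\E{X}$ and $g(n)=n\E{X_{1:n}}$ translates the ordering directly into the claimed comparison: the fork-early-cancel cost is at least (respectively, at most) the fork-join cost when $\bar{F}_X$ is log-concave (respectively, log-convex).

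The main obstacle is almost entirely bookkeeping rather than analysis: since all the analytic content is packaged inside \Cref{lem:r_E_X_1_r_trend}, the one place an error can creep in is the direction of the inequality. I would therefore carefully pin down the correspondence among the log-concave/log-convex regime, the sense in which $g$ is monotone, and the resulting ordering of $g(1)$ and $g(n)$, and sanity-check it against the boundary exponential case where $\bar{F}_X$ is simultaneously log-concave and log-convex. There $X_{1:r}\sim\mathrm{Exp}(r\mu)$, so $g(r)=r/(r\mu)=1/\mu$ is constant in $r$ and both systems have equal cost $\E{C}=\E{X}=1/\mu$. This degenerate check confirms the reduction and the equality on the boundary, and the monotonicity supplied by \Cref{lem:r_E_X_1_r_trend} then resolves the strict direction of the inequality on each of the two sides.
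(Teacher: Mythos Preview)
Your proposal is correct and matches the paper's approach exactly: the paper likewise derives \Cref{coro:early_cancel_E_C_trend} by comparing $\E{C}=\E{X}=1\cdot\E{X_{1:1}}$ from \Cref{thm:rep_queueing_early_cancel} with $\E{C}=n\E{X_{1:n}}$ from \Cref{thm:rep_queueing} and invoking the monotonicity of $r\E{X_{1:r}}$ in \Cref{lem:r_E_X_1_r_trend}. Your exponential sanity check is a nice addition but not needed for the argument.
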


In the low $\lambda$ regime, the $(n,1)$ fork-join system gives lower $\E{T}$ than $(n,1)$ fork-early-cancel because of higher diversity due to redundant tasks. By \Cref{coro:high_traffic_comp}, the high $\lambda$ regime, the system with lower $\E{C}$ has lower expected latency. 

\begin{coro}
\label{coro:early_cancel_E_T_trend}
If $\bar{F}_X$ is log-concave, early cancellation gives higher $\E{T}$ than $(n,1)$ fork-join when $\lambda$ is small, and lower in the high $\lambda$ regime. If $\bar{F}_X$ is log-convex, then early cancellation gives higher $\E{T}$ for both low and high $\lambda$.
\end{coro}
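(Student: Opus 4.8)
The plan is to settle the small-$\lambda$ and large-$\lambda$ regimes separately, using \Cref{thm:rep_queueing} for the $(n,1)$ fork-join system and \Cref{thm:rep_queueing_early_cancel} for the $(n,1)$ fork-early-cancel system, and then combine them. For small $\lambda$ I would let $\lambda\to 0$ in the two latency expressions. For the fork-join system, \Cref{thm:rep_queueing} gives $\E{T}=\E{X_{1:n}}+\lambda\E{X_{1:n}^2}/(2(1-\lambda\E{X_{1:n}}))$, whose queueing term vanishes, so $\E{T}\to\E{X_{1:n}}$. For the fork-early-cancel system, \Cref{thm:rep_queueing_early_cancel} identifies $\E{T}$ with the mean response time of an $M/G/n$ queue of load $\rho=\lambda\E{X}/n$; as $\lambda\to 0$ an arriving job finds a free server with probability tending to $1$, so $\E{T}\to\E{X}$. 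It then remains to compare the two limits, which is a one-line stochastic-ordering fact: $\Pr(X_{1:n}>x)=\bar{F}_X(x)^n\le\bar{F}_X(x)=\Pr(X>x)$ for every $x$, hence $\E{X_{1:n}}\le\E{X}$, strictly so whenever $X$ is non-degenerate and $n\ge 2$. By continuity of $\E{T}$ in $\lambda$, the fork-early-cancel system has strictly larger $\E{T}$ for all sufficiently small $\lambda$ --- and this argument uses neither log-concavity nor log-convexity, so it already disposes of the ``small $\lambda$'' half of both cases.

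For large $\lambda$, both systems are symmetric across the $n$ servers, so \Cref{clm:capacity_in_terms_of_EC} and \Cref{coro:high_traffic_comp} apply: near capacity, the system with the smaller $\E{C}$ has the smaller $\E{T}$. The relevant costs are $n\E{X_{1:n}}$ for fork-join and $\E{X}$ for fork-early-cancel (\Cref{thm:rep_queueing}, \Cref{thm:rep_queueing_early_cancel}), and \Cref{lem:r_E_X_1_r_trend} evaluated at $r=1$ and $r=n$ (equivalently \Cref{coro:early_cancel_E_C_trend}) gives $\E{X}\le n\E{X_{1:n}}$ when $\bar{F}_X$ is log-concave and $\E{X}\ge n\E{X_{1:n}}$ when it is log-convex. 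Hence for $\lambda$ near capacity: with log-concave $\bar{F}_X$, fork-early-cancel has the smaller cost and so the smaller $\E{T}$; with log-convex $\bar{F}_X$, it has the larger cost and so the larger $\E{T}$. Combining with the small-$\lambda$ conclusion, for log-concave $\bar{F}_X$ the two latency curves cross (fork-early-cancel is worse at light load, better at heavy load), while for log-convex $\bar{F}_X$ fork-early-cancel is worse at both ends --- which is exactly the statement.

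The one delicate point --- and the step I expect to be the main obstacle --- is making the large-$\lambda$ comparison precise, because \Cref{coro:high_traffic_comp} is asymptotic and the two systems have \emph{different} service capacities, $n/\E{X}$ for fork-early-cancel and $1/\E{X_{1:n}}$ for fork-join. The clean resolution is to read ``heavy load'' as $\lambda$ approaching $\min\{n/\E{X},\,1/\E{X_{1:n}}\}$: the minimiser is precisely the system with the larger $\E{C}$, so its $\E{T}$ diverges at that threshold while the other's stays finite, and therefore on a left-neighbourhood of the threshold the ordering of $\E{T}$ follows the ordering of $\E{C}$, exactly as \Cref{coro:high_traffic_comp} asserts. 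A secondary, routine point is justifying $\E{T^{M/G/n}}\to\E{X}$ as $\lambda\to 0$ without invoking the heuristic \eqref{eqn:rep_case_latency_early_cancel}; this is a standard light-traffic argument for multiserver queues, and can alternatively be obtained by sandwiching $\E{X}\le\E{T^{M/G/n}}$ between the no-wait value below and a single-server comparison above.
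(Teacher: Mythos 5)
Your proposal follows essentially the same route as the paper: at low $\lambda$ the comparison reduces to $\E{X_{1:n}}\le\E{X}$ (the diversity argument, independent of log-concavity), and at high $\lambda$ one invokes \Cref{coro:high_traffic_comp} together with the cost ordering $\E{X}\lessgtr n\E{X_{1:n}}$ from \Cref{lem:r_E_X_1_r_trend} (i.e.\ \Cref{coro:early_cancel_E_C_trend}). Your explicit handling of the fact that the two systems have different service capacities --- reading ``high load'' as $\lambda$ approaching the smaller of the two capacities, where the higher-cost system's latency diverges --- is a correct and slightly more careful rendering of the step the paper leaves implicit.
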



\begin{figure}[t]
    \centering
    \includegraphics[width=3.2in]{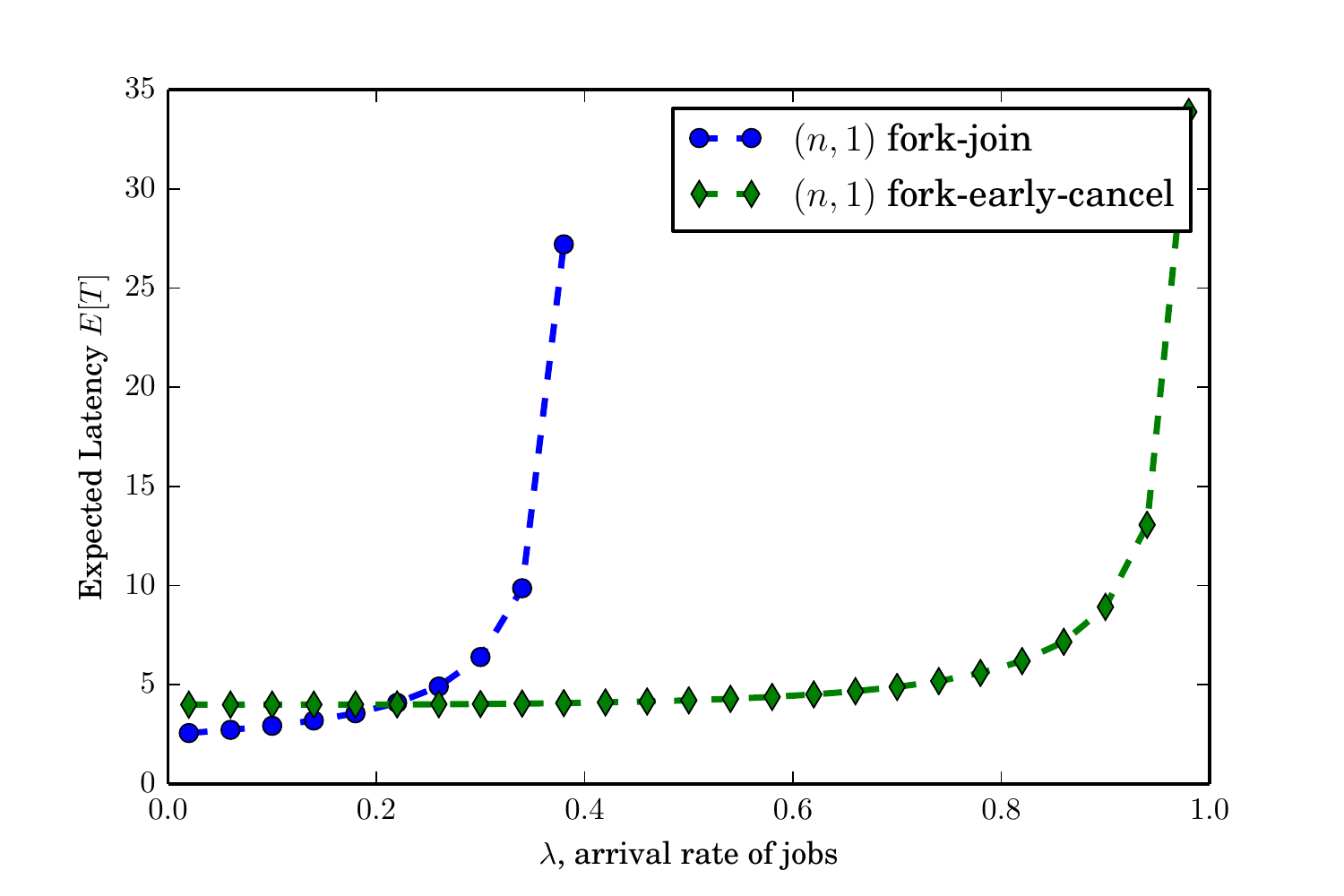}
    \caption{For the $(4,1)$ system with service time $X \sim \SExp(2, 0.5)$ which is log-concave, early cancellation is better in the high $\lambda$ regime, as given by \Cref{coro:early_cancel_E_T_trend}. \label{fig:normal_early_vs_lambda_log_concave}}
 \vspace{-0.25cm}
\end{figure}

%
%

%
%

Fig.~\ref{fig:normal_early_vs_lambda_log_concave} and Fig.~\ref{fig:normal_early_vs_lambda_log_convex} illustrate \Cref{coro:early_cancel_E_T_trend}. Fig.~\ref{fig:normal_early_vs_lambda_log_concave} shows a comparison of $\E{T}$ with and without early cancellation of redundant tasks for the $(4,1)$ system with service time $X \sim \SExp(2,0.5)$. We observe that early cancellation gives lower $\E{T}$ in the high $\lambda$ regime. In Fig.~\ref{fig:normal_early_vs_lambda_log_convex} we observe that when $X$ is $\HyperExp(0.1, 1.5, 0.5)$ which is log-convex, early cancellation is worse for both small and large $\lambda$.

In general, early cancellation is better when $X$ is less random (lower coefficient of variation). For example, a comparison of $\E{T}$ with $(n,1)$ fork-join and $(n,1)$ fork-early-cancel systems as $\Delta$, the constant part of service time $\SExp(\Delta, \mu)$ varies indicates that early cancellation is better for larger $\Delta$. When $\Delta$ is small, there is more randomness in the service time, and hence keeping the redundant tasks running gives more diversity and lower $\E{T}$. But as $\Delta$ increases, task service times are more deterministic and the diversity benefit of having redundant tasks is smaller. 

\begin{figure}[t]
\centering
\includegraphics[width=3.2in]{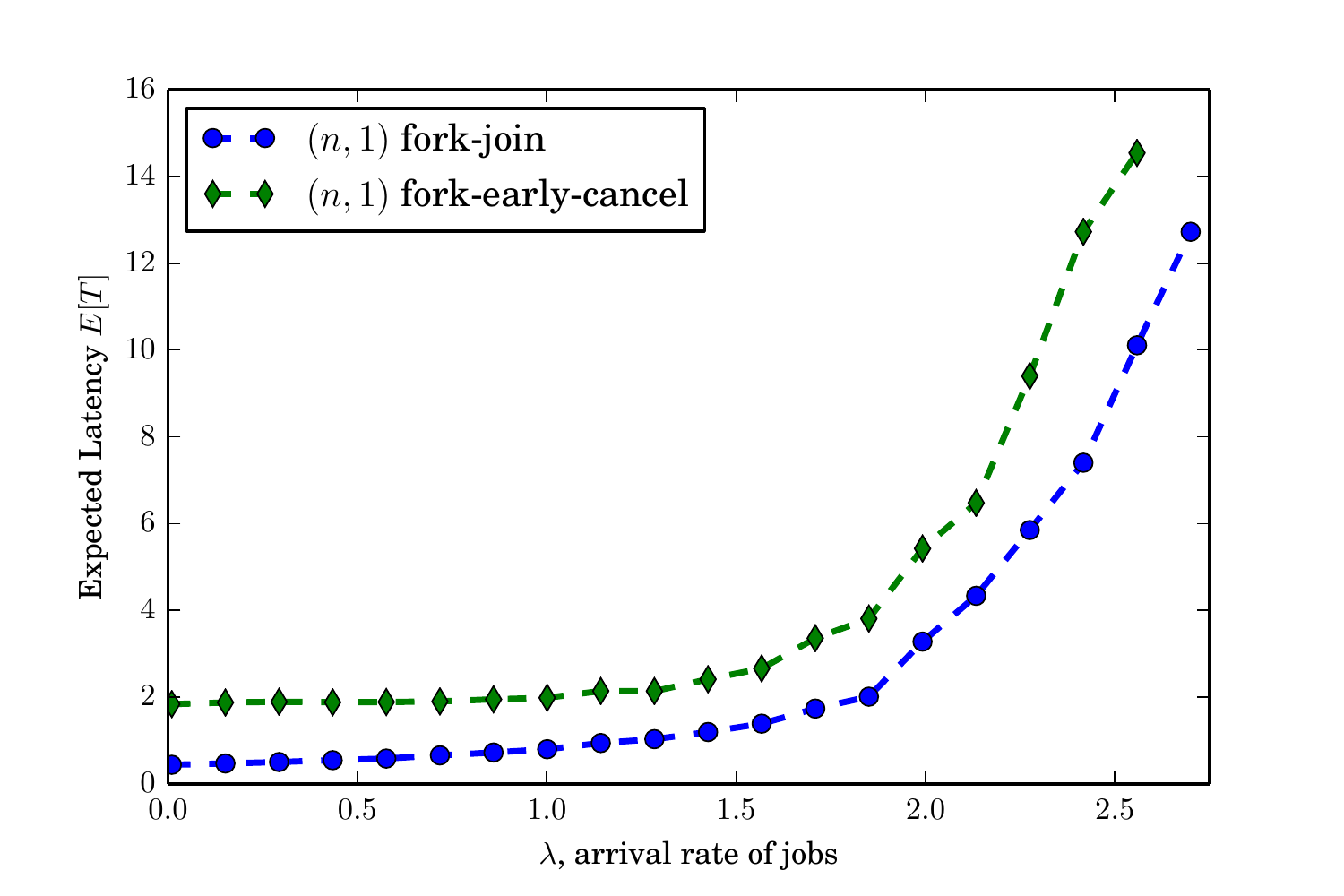}
\caption{ For the $(4,1)$ system with $X \sim \HyperExp(0.1, 1.5, 0.5)$, which is log-convex, early cancellation is worse in both low and high $\lambda$ regimes, as given by \Cref{coro:early_cancel_E_T_trend}. \label{fig:normal_early_vs_lambda_log_convex}}
 \vspace{-0.25cm}
\end{figure}

\section{$(n,r,1)$ partial-fork-join system}
\label{sec:partial_fork}

We now analyze the latency-cost trade-off for the $(n,r,1)$ partial-fork-join system where an incoming job is forked to some $r$ out $n$ servers and we wait for any $1$ task to finish. The $r$ servers are chosen using a symmetric policy (\Cref{defn:symmetric_forking}).  Some examples of symmetric policies are:
\begin{enumerate}
\item \textit{Group-based random: } This policy holds when $r$ divides $n$. The $n$ servers are divided into $n/r$ groups of $r$ servers each. A job is forked to one of these groups, chosen uniformly at random. 
\item \textit{Uniform Random: } A job is forked to any $r$ out of $n$ servers, chosen uniformly at random.\footnote{In the $(n,r,1)$ partial-fork-join with uniform random policy, we replicate the task at $r$ randomly chosen queues. Instead, in the power-of-$r$ scheduling \cite{powerof2}, a job is assigned to the shortest of $r$ randomly chosen queues. Power-of-$r$ is similar to early cancellation of $r-1$ out of $r$ tasks, even before they join the queues. Thus, by \Cref{coro:early_cancel_E_T_trend}, we conjecture that for log-convex $\bar{F}_X$, the $(n,r,1)$ partial-fork-join system gives lower latency than power-of-$r$ scheduling. }
\end{enumerate}

Fig.~\ref{fig:partial_fork} illustrates the $(4,2,1)$ partial-fork-join system with the group-based random and the uniform-random policies. In the sequel, we develop insights into the best $r$ and the choice of servers for a given service distribution $F_X$. 
%

\begin{figure}[t]
\centering
\begin{subfigure}[t]{0.5\linewidth}
    \centering
   \includegraphics[width=1.80in]{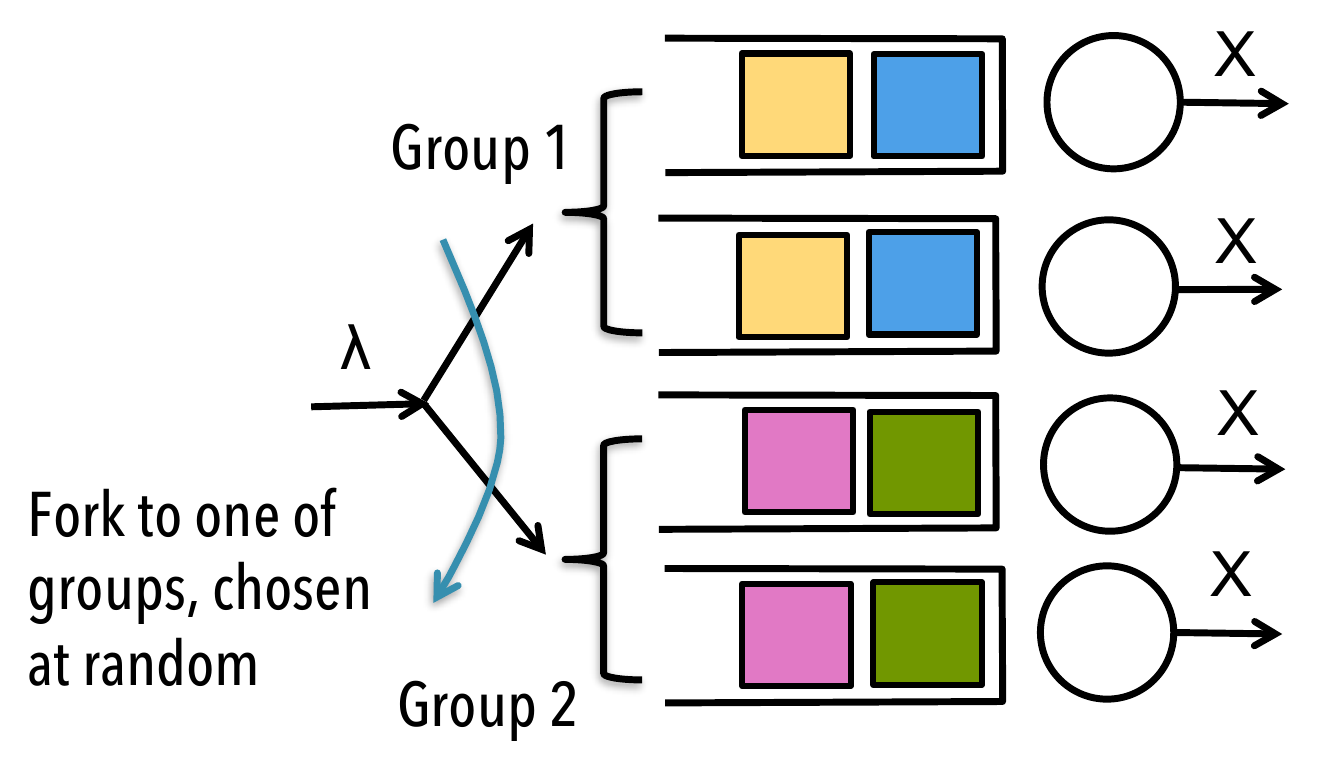}
	\caption{Group-based random}
\end{subfigure}
~
\begin{subfigure}[t]{0.4 \linewidth}
    \centering
   \includegraphics[width=1.55in]{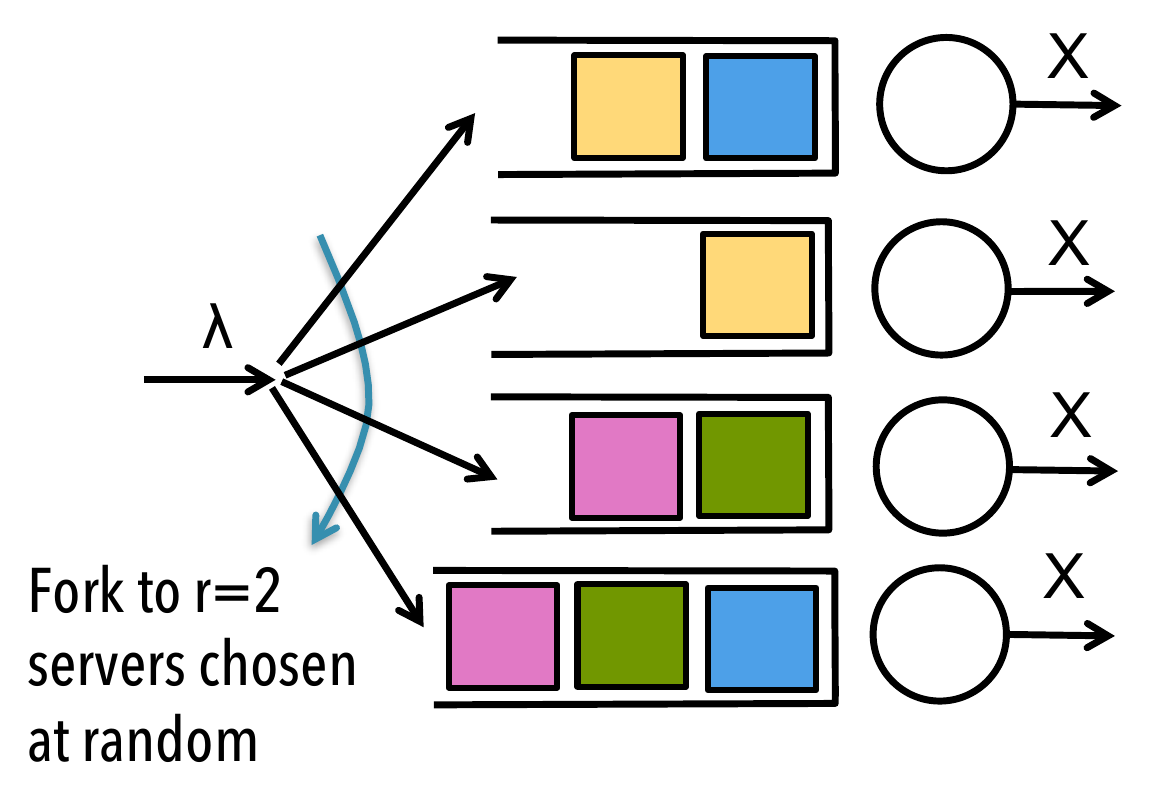}
\caption{Uniform random}
\end{subfigure}
\caption{$(4,2,1)$ partial-fork-join system, where each job is forked to $r=2$ servers, chosen according to the group-based random or uniform random policies. \label{fig:partial_fork}}
\end{figure}

\subsection{Latency-Cost Analysis}
%
In the group-based random policy, each group behaves as an $(r,1)$ fork-join system, the $r$ tasks of a job starting service simultaneously. Thus, the expected latency and cost follow from \Cref{thm:rep_queueing} as given in \Cref{lem:latency_cost_group_based} below.

\begin{lem}[Group-based random]
\label{lem:latency_cost_group_based}
The expected latency and cost when each job is forked to one of $n/r$ groups of $r$ servers each are given by
\begin{align}
\E{T} &= \E{X_{1:r}}  + \frac{\lambda r \E{X_{1:r}^2}}{2(n - \lambda r \E{X_{1:r}})} \label{eqn:E_T_group_based} \\
\E{C} &= r \E{X_{1:r}} \label{eqn:E_C_group_based}
\end{align}
\end{lem}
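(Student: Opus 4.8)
The plan is to reduce the group-based random policy to a collection of independent $(r,1)$ fork-join systems and then invoke \Cref{thm:rep_queueing}. First I would observe that since jobs arrive according to a Poisson process of rate $\lambda$ and each job independently selects one of the $n/r$ groups uniformly at random, Poisson thinning implies that the arrival stream seen by any one group is itself Poisson with rate $\lambda' = \lambda/(n/r) = \lambda r/n$, and that the streams into distinct groups are independent. Since groups share no servers and a task is never rerouted, the $n/r$ groups evolve as independent systems, so it suffices to analyze one group in isolation.

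Next I would argue, exactly as in the proof of \Cref{lem:mg1_eq}, that a single group of $r$ servers fed by a Poisson stream behaves as an $M/G/1$ queue with service time $X_{1:r}$: the first job to arrive at an empty group starts all $r$ of its tasks simultaneously, the earliest finishes after $X_{1:r}$ and cancels the rest, so inductively every job's $r$ tasks start together and the job departs $X_{1:r}$ time later, i.i.d.\ across jobs. Then the expected latency of the group follows from the Pollaczek--Khinchine formula, i.e.\ \eqref{eqn:E_T_rep_queueing} with arrival rate $\lambda'$ and service time $X_{1:r}$; substituting $\lambda' = \lambda r/n$ and using $1 - \lambda'\E{X_{1:r}} = (n - \lambda r\E{X_{1:r}})/n$ to clear the denominator yields \eqref{eqn:E_T_group_based}. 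Since all groups are statistically identical, this is also the system-wide $\E{T}$.

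For the cost, I would note that when a job is routed to a group, each of its $r$ servers is occupied by that job's task for exactly $X_{1:r}$ time before the earliest task completes; the remaining $r-1$ tasks are already in service, so they do accrue cost up to the cancellation instant, giving $C = r X_{1:r}$ for every realization and hence $\E{C} = r\E{X_{1:r}}$. Equivalently, this is the specialization of \eqref{eqn:C_expr} to $t_1 = t_2 = \cdots = t_r = 0$ (with the remaining $t_i = \infty$), which collapses to $C = r S = r X_{1:r}$.

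The main obstacle I anticipate is making the reduction to independent per-group $M/G/1$ queues fully rigorous: one must check that the routing decisions, being i.i.d.\ and independent of the arrival epochs and the service times, split the Poisson arrival process into independent Poisson sub-streams and that nothing else couples the groups. Once that decoupling is established, the remainder is a direct application of \Cref{thm:rep_queueing} with the rescaled arrival rate $\lambda r/n$, so the algebra leading to \eqref{eqn:E_T_group_based} and \eqref{eqn:E_C_group_based} is routine.
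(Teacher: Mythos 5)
Your proposal is correct and follows essentially the same route as the paper: split the Poisson arrivals into independent Poisson streams of rate $\lambda r/n$ per group, observe that each group is an independent $(r,1)$ fork-join system, and apply \Cref{thm:rep_queueing} with the rescaled arrival rate (the substitution $1-\lambda'\E{X_{1:r}} = (n-\lambda r\E{X_{1:r}})/n$ gives \eqref{eqn:E_T_group_based} exactly). Your added care about Poisson thinning and the decoupling of groups only makes explicit what the paper's proof asserts in one line.
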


\begin{proof}
Since the job arrivals are split equally across the $n/r$ groups, such that the arrival rate to each group is a Poisson process with rate $\lambda r/n$. The $r$ tasks of each job start service at their respective servers simultaneously, and thus each group behaves like an independent $(r,1)$ fork-join system with Poisson arrivals at rate $\lambda r/n$. Hence, the expected latency and cost follow from \Cref{thm:rep_queueing}. 
\end{proof}

Using \eqref{eqn:E_C_group_based} and \Cref{clm:capacity_in_terms_of_EC}, we can infer that the service capacity (maximum supported $\lambda$) for an $(n,r,1)$ system with group-based random policy is
\begin{align}
\lambda_{max} = \frac{n}{r\E{X_{1:r}}} \label{eqn:lmbda_max_group_based}
\end{align}
From \eqref{eqn:lmbda_max_group_based} we can infer that the $r$ that minimizes $r \E{X_{1:r}}$ results in the highest service capacity, and hence the lowest $\E{T}$ in the high traffic regime. By \Cref{lem:r_E_X_1_r_trend}, the optimal $r$ is $r=1$ ($r=n$) for log-concave (log-convex) $\bar{F}_X$.

For other symmetric policies, it is difficult to get an exact analysis of $\E{T}$ and $\E{C}$ because the tasks of a job can start at different times. However, we can get bounds on $\E{C}$ depending on the log-concavity of $X$, given in \Cref{thm:E_C_r_trend} below.

\begin{thm}
\label{thm:E_C_r_trend}
Consider an $(n,r,1)$ partial-fork join system, where a job is forked into tasks at $r$ out of $n$ servers chosen according to a symmetric policy. For any relative task start times $t_i$, $\E{C}$ can be bounded as follows.
\begin{align} 
 r \E{X_{1:r}} \geq \E{C} &\geq \E{X} \quad \quad \text{if } \bar{F}_X \text{ is  log-concave} \label{eqn:E_C_bounds_log_concave}\\
\E{X} \geq \E{C} &\geq r \E{X_{1:r}}  \quad \text{if } \bar{F}_X \text{ is  log-convex} \label{eqn:E_C_bounds_log_convex}
\end{align} 
In the extreme case when $r=1$, $\E{C} = \E{X}$, and when $r = n$, $\E{C} = n \E{X_{1:n}}$. 
\end{thm}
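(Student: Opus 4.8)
The plan is to work directly from the cost representation in \eqref{eqn:C_expr}--\eqref{eqn:S_tail_dist}. Since $\posfunc{S-t_i}=\int_0^\infty \mathbf{1}[t_i<s<S]\,ds$, summing over $i$ gives $C=\int_0^{S}N(s)\,ds$, where $N(s):=\#\{i:t_i<s\}$ is the number of servers still running $s$ time units after the earliest task started; in particular $t_1=0$ and, for partial forking, $t_{r+1}=\dots=t_n=\infty$. Taking expectations and using $\Pr(S>s)=\prod_{i=1}^{r}\bar F_X(s-t_i)$ (with the convention $\bar F_X(y)=1$ for $y\le 0$),
\[ \E{C}=\int_0^\infty N(s)\,\prod_{i=1}^{r}\bar F_X(s-t_i)\,ds. \]
The two extreme cases are immediate: if all $t_i=0$ then $N\equiv r$ and the integrand is $r\,\bar F_X(s)^r$, so $\E{C}=r\E{X_{1:r}}$; if $r=1$ the integrand is $\bar F_X(s)$, so $\E{C}=\E{X}$. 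It therefore suffices to show that, from an arbitrary admissible configuration $0=t_1\le\dots\le t_r$, one can deform the start times to $(0,\dots,0)$ with $\E{C}$ monotone, and can separately push the $t_i$ ($i\ge 2$) to $\infty$ one at a time with $\E{C}$ monotone, the direction of monotonicity being dictated by log-concavity. Because the integrand above is symmetric in $(t_1,\dots,t_r)$, it is enough to understand how $\E{C}$ moves when we vary the \emph{largest} of the start times.

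The engine is a single differentiation lemma: \emph{if $t_j=\max_i t_i$, then $\partial \E{C}/\partial t_j\le 0$ when $\bar F_X$ is log-concave and $\ge 0$ when it is log-convex.} Applying the Leibniz rule, with $h(s)=\prod_{i=1}^r\bar F_X(s-t_i)$, $\tilde h_j(s)=\prod_{i\ne j}\bar F_X(s-t_i)$, $\partial_{t_j}h(s)=f_X(s-t_j)\tilde h_j(s)$, and $h(t_j)=\tilde h_j(t_j)$ (since $\bar F_X(0)=1$), gives
\[ \frac{\partial \E{C}}{\partial t_j}=-\,\tilde h_j(t_j)+\int_{t_j}^{\infty} N(s)\,f_X(s-t_j)\,\tilde h_j(s)\,ds. \]
Because $t_j$ is the largest start time, $N(s)=r$ for all $s>t_j$. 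Substituting $s=t_j+u$ and using the multiplicativity of $\bar F_X$ equivalent to \eqref{eqn:new_better_than_used} --- $\bar F_X(a+b)\le\bar F_X(a)\bar F_X(b)$ for log-concave $\bar F_X$, with the reverse for log-convex --- we get $\tilde h_j(t_j+u)=\prod_{i\ne j}\bar F_X\big((t_j-t_i)+u\big)\le\tilde h_j(t_j)\,\bar F_X(u)^{r-1}$ since every $t_j-t_i\ge 0$. Finally $\int_0^\infty f_X(u)\bar F_X(u)^{r-1}\,du=\int_0^1 w^{r-1}\,dw=1/r$, so the integral is at most $r\,\tilde h_j(t_j)\cdot\tfrac1r=\tilde h_j(t_j)$, proving $\partial \E{C}/\partial t_j\le 0$; the log-convex case is identical with every inequality reversed.

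Granting the lemma, the four bounds follow. For the log-concave upper bound \eqref{eqn:E_C_bounds_log_concave}, deform $t_i\mapsto t_i^\theta:=\min(t_i,\theta)$ and let $\theta$ decrease from $\max_i t_i$ to $0$; then $\tfrac{d}{d\theta}\E{C}(t^\theta)=\sum_{i:\,t_i>\theta}\partial_{t_i}\E{C}(t^\theta)\le 0$, since each such index has $t_i^\theta=\theta=\max_k t_k^\theta$, so $\E{C}$ can only grow as $\theta\downarrow 0$, landing at the all-zero configuration; hence $\E{C}\le r\E{X_{1:r}}$. For the log-concave lower bound, push the largest start time to $\infty$: by the lemma $\E{C}$ is non-increasing along this path, and dominated convergence (the integrand is bounded by $r\,\bar F_X(s)$, integrable because $\E{X}<\infty$, using $h(s)\le\bar F_X(s-t_1)=\bar F_X(s)$) shows the limit is $\E{C}$ of the $(r-1)$-task configuration; iterating down to $r=1$ gives $\E{C}\ge\E{X}$. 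The log-convex bounds \eqref{eqn:E_C_bounds_log_convex} are the mirror image. Since all of this holds for every fixed $(t_i)$, it persists after averaging over the (possibly random, scheduling-dependent) start times of the $(n,r,1)$ system.

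The step I expect to be the crux is controlling $N(s)$ inside the integral: a priori $N(s)$ can already equal $r$ while $\prod_i\bar F_X(s-t_i)$ has lost little mass, so there is no pointwise comparison of integrands, and a crude bound only yields $\E{C}\le r\E{X}$ rather than $r\E{X_{1:r}}$. The resolution --- and the reason the whole argument is organized around moving only the current maximum --- is that past the maximum start time $N(s)$ equals exactly $r$, at which point submultiplicativity combined with the elementary identity $\int_0^1 w^{r-1}dw=1/r$ cancels the factor $r$ precisely. A secondary technical point is regularity: the Leibniz computation assumes $X$ has a density (which holds for the shifted-exponential and hyper-exponential used in the numerics), and the general log-concave/log-convex case follows by an approximation argument.
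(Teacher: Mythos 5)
Your argument is correct, and it is a complete proof of a statement whose proof the paper only sketches (the conference version says merely ``take expectation on both sides in \eqref{eqn:C_expr} and show that for log-concave and log-convex $\bar{F}_X$ we get the bounds,'' deferring details to the extended version). You start from the same place --- the representation $C=S+\sum_i\posfunc{S-t_i}$, rewritten as $\E{C}=\int_0^\infty N(s)\Pr(S>s)\,ds$ --- but the machinery you build on top of it is your own: a monotone-deformation argument driven by a single derivative lemma for the largest start time. The computation checks out: with $t_j=\max_i t_i$ one indeed has $N(s)=r$ for $s>t_j$, the boundary term is $h(t_j)=\tilde h_j(t_j)$ because $\bar F_X(0)=1$, submultiplicativity $\bar F_X(a+b)\le\bar F_X(a)\bar F_X(b)$ (equivalent to \eqref{eqn:new_better_than_used} for log-concave $\bar F_X$ with $X>0$) gives $\tilde h_j(t_j+u)\le\tilde h_j(t_j)\bar F_X(u)^{r-1}$, and $\int_0^\infty f_X\bar F_X^{\,r-1}=1/r$ cancels the factor $r$ exactly, so $\partial\E{C}/\partial t_j\le 0$ (reversed for log-convex). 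You correctly identified that this cancellation is the crux; a naive pointwise comparison only yields $r\E{X}$. The two deformations (capping at $\theta\downarrow 0$ to reach the simultaneous-start configuration, and pushing the maximum to $\infty$ to peel off one task at a time, with dominated convergence justifying the limit) then deliver all four bounds, and the lemma does apply at tied maxima since $\bar F_X(0)=1$ makes the relevant inequalities hold with equality there. What your route buys is a sharper structural statement than the theorem itself: $\E{C}$ is coordinatewise monotone in the largest start time, which immediately re-derives \Cref{coro:cost_diff_policies} and gives a continuous interpolation between the two endpoints $r\E{X_{1:r}}$ and $\E{X}$; a direct term-by-term bound on $\E{\posfunc{S-t_i}}$ (the more obvious reading of the paper's sketch) would prove only the stated inequalities.

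Two technical points are worth tightening if you write this up. First, the Leibniz/differentiation-under-the-integral step assumes $X$ has a density; as you note, log-concavity (log-convexity) of $\bar F_X$ forces absolute continuity on the interior of the support, but a log-concave $\bar F_X$ may have an atom at the right endpoint of a bounded support, so the ``approximation argument'' should be stated, not just gestured at. Second, the reduction from random, scheduling-dependent start times to fixed $t_i$ uses that the current job's service times $X_1,\dots,X_r$ are independent of its tasks' start times (which are determined by earlier jobs); this is implicit in the paper's framework in \Cref{subsec:task_start_times} but deserves one sentence.
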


To prove \Cref{thm:E_C_r_trend} we take expectation on both sides in \eqref{eqn:C_expr}, and show that for log-concave and log-convex $\bar{F}_X$, we get the bounds in \eqref{eqn:E_C_bounds_log_concave} and \eqref{eqn:E_C_bounds_log_convex}, which are independent of the relative task start times $t_i$. The detailed proof is omitted here, but can be found in the extended version \cite{gauri_tompecs_arxiv_2015}.

In the sequel, we use the bounds in \Cref{thm:E_C_r_trend} to gain insights into choosing the best $r$ and best scheduling policy when $\bar{F}_X$ is log-concave or log-convex.

\subsection{Optimal value of $r$}

\begin{figure}[t]
    \centering
    \includegraphics[width=3.25in]{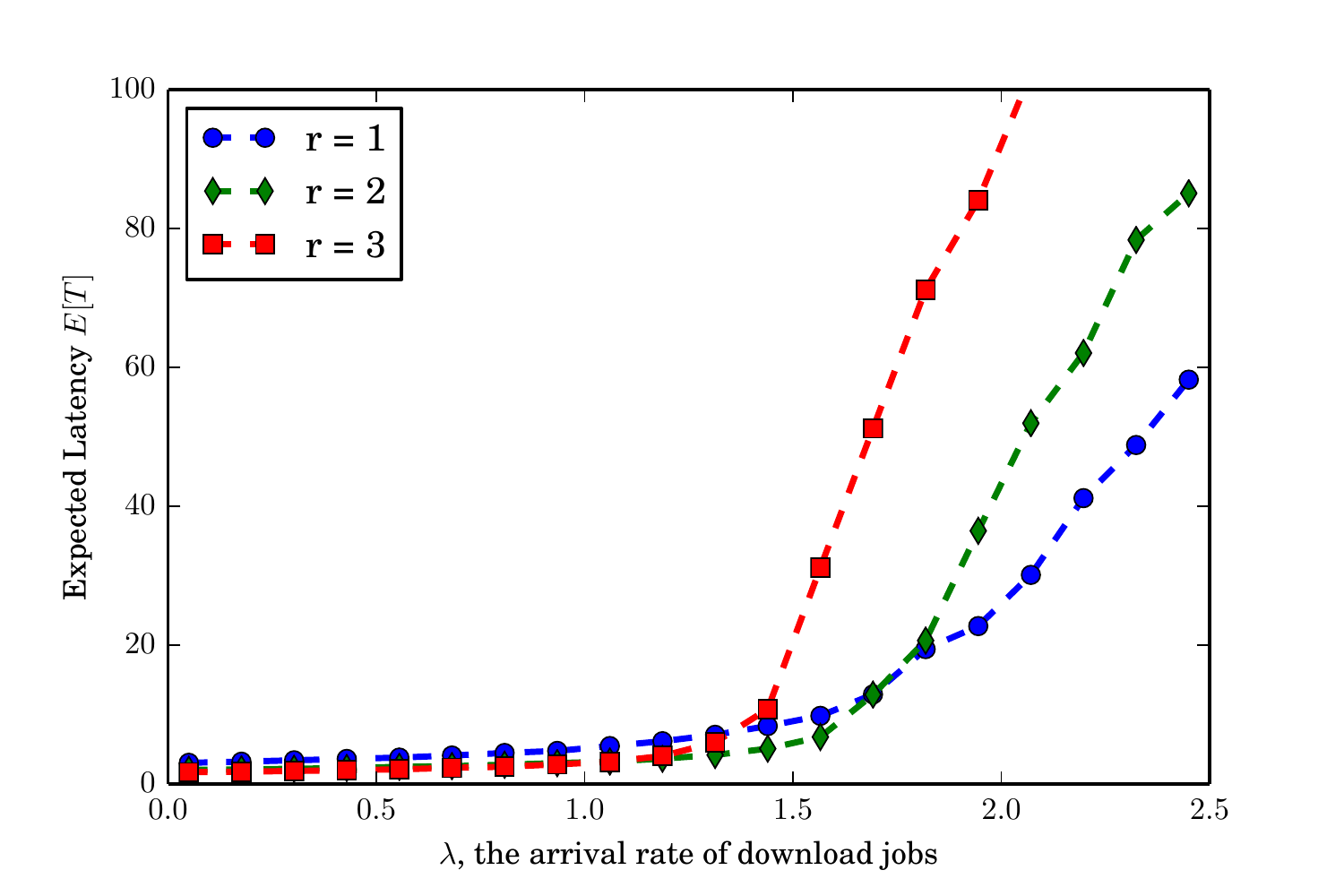}
    \caption{For $X \sim \SExp(1,0.5)$ which is log-concave, forking to less (more) servers reduces expected latency in the low (high) $\lambda$ regime. \label{fig:E_T_rep_shifted_exp_vs_lambda_diff_r}}
 \vspace{-0.25cm}
\end{figure}

By \Cref{lem:r_E_X_1_r_trend}, $r \E{X_{1:r}}$ is non-decreasing (non-increasing) with $r$ for log-concave (log-convex) $\bar{F}_X$. By this fact and \Cref{thm:E_C_r_trend}, we get the following corollaries about how $\E{C}$ and $\E{T}$ vary with $r$. 

\begin{coro}[Expected Cost vs. $r$]
\label{coro:EC_vs_r}
For a symmetric policy, forking of each job to $r$ out of $n$ servers, $r=1$ ($r=n$) minimizes the expected cost $\E{C}$ when $\bar{F}_X$ is log-concave (log-convex).
\end{coro}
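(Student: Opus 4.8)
The plan is to deduce the corollary directly from \Cref{thm:E_C_r_trend} together with \Cref{lem:r_E_X_1_r_trend}, treating it as a combination of a uniform bound on $\E{C}$ and the monotonicity of $r\E{X_{1:r}}$. The key observation is that \Cref{thm:E_C_r_trend} pins $\E{C}$ between $\E{X}$ and $r\E{X_{1:r}}$ (in the appropriate order depending on log-concavity/convexity), and that the two endpoints of the range of $r$, namely $r=1$ and $r=n$, attain exactly these extreme values.

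First, consider the log-concave case. By \eqref{eqn:E_C_bounds_log_concave} we have $\E{C} \geq \E{X}$ for every symmetric policy and every $r$, and by the extreme-case statement in \Cref{thm:E_C_r_trend}, $r=1$ achieves $\E{C} = \E{X}$. Hence for any $r \in \{1,\dots,n\}$, $\E{C}(r) \geq \E{X} = \E{C}(1)$, so $r=1$ is a minimizer.

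Next, consider the log-convex case. By \eqref{eqn:E_C_bounds_log_convex}, $\E{C}(r) \geq r\E{X_{1:r}}$ for every $r$. The one point that needs care — and the only place the argument is not a one-line restatement — is that this lower bound depends on $r$, so to conclude that $r=n$ is optimal I need to chain it with monotonicity: by \Cref{lem:r_E_X_1_r_trend}, $r\E{X_{1:r}}$ is non-increasing in $r$ for log-convex $\bar{F}_X$, hence $r\E{X_{1:r}} \geq n\E{X_{1:n}}$ for all $r \leq n$. Combining, $\E{C}(r) \geq r\E{X_{1:r}} \geq n\E{X_{1:n}}$, and since $r=n$ attains $\E{C}(n) = n\E{X_{1:n}}$ by the extreme-case statement of \Cref{thm:E_C_r_trend}, we get $\E{C}(r) \geq \E{C}(n)$ for all $r$, so $r=n$ is a minimizer.

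I do not anticipate a genuine obstacle here: all the analytic work has already been done in \Cref{thm:E_C_r_trend} (the relative-task-start-time-independent bounds on $\E{C}$ obtained by taking expectations in \eqref{eqn:C_expr}) and in \Cref{lem:r_E_X_1_r_trend} (the monotonicity of $r\E{X_{1:r}}$). The corollary is essentially the bookkeeping step that combines the uniform bound with the monotone endpoint. The only mild subtlety worth flagging in the write-up is the log-convex chaining argument above, since there the relevant bound is itself $r$-dependent and one must invoke \Cref{lem:r_E_X_1_r_trend} to know that it is minimized at $r=n$ rather than at some interior value.
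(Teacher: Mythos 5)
Your proposal is correct and follows essentially the same route as the paper, which derives this corollary by combining the bounds of \Cref{thm:E_C_r_trend} with the monotonicity of $r\E{X_{1:r}}$ from \Cref{lem:r_E_X_1_r_trend}. Your write-up is in fact slightly more careful than the paper's one-line derivation, correctly noting that the monotonicity is only needed in the log-convex case where the lower bound $r\E{X_{1:r}}$ is itself $r$-dependent.
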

%
%
%
%

\begin{coro}[Expected Latency vs. $r$]
\label{coro:latency_vs_r}
In the low-traffic regime, forking to all servers ($r =n$) gives the lowest $\E{T}$ in the low $\lambda$ regime for any service time distribution $F_X$. In the high traffic regime, $r=1$ ($r=n$) gives lowest $\E{T}$ if $\bar{F}_X$ is log-concave (log-convex).
\end{coro}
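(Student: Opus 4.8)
The plan is to split the statement into its low-traffic and high-traffic halves, since they rest on completely different mechanisms.

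The high-traffic half follows directly from results already established. By \Cref{coro:EC_vs_r}, among symmetric forking policies $r=1$ minimizes $\E{C}$ when $\bar{F}_X$ is log-concave and $r=n$ minimizes $\E{C}$ when $\bar{F}_X$ is log-convex. Since each choice of $r$ together with its server-selection rule is a symmetric policy in the sense of \Cref{defn:symmetric_forking}, \Cref{coro:high_traffic_comp} applies: the policy with the smaller $\E{C}$ also has the smaller $\E{T}$ once $\lambda$ is close to the service capacity. Chaining these two facts gives the high-traffic claim with no further work.

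For the low-traffic half (which must hold for \emph{any} $F_X$), the key step I would prove is that $\E{T}\to\E{X_{1:r}}$ as $\lambda\to 0^+$, for the $(n,r,1)$ system under any symmetric forking policy. Tag an arriving job. Because $\E{C}$ is a finite constant independent of $\lambda$, the per-server utilization $\rho=\lambda\E{C}/n$ tends to $0$, so the tagged job finds all $n$ servers idle with probability $1-O(\lambda)$; on that event the $r$ tasks begin service simultaneously and the job departs after $X_{1:r}$, and since the tagged job's service is independent of the arrival-epoch state this event contributes $\E{X_{1:r}}\Pr(\text{all idle})\to\E{X_{1:r}}$ to $\E{T}$. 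The complementary $O(\lambda)$-probability event contributes $o(1)$ under the finite-moment assumptions already needed for $\E{T}<\infty$. Hence $\E{T}=\E{X_{1:r}}+o(1)$, consistent with \eqref{eqn:E_T_group_based} for the group-based policy. Since $X_{1:r+1}=\min(X_{1:r},X_{r+1})\le X_{1:r}$ pathwise, $\E{X_{1:r}}$ is non-increasing in $r$ and minimized at $r=n$; together with continuity of $\E{T}$ in $\lambda$ this shows $r=n$ attains the smallest $\E{T}$ over $r\in\{1,\dots,n\}$ for all sufficiently small $\lambda$.

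The high-traffic half is essentially a one-line corollary, so I expect the only real obstacle to lie in the low-traffic half: bounding the contribution of the rare ``some server busy'' event (where the moment assumption is used) and, more delicately, upgrading the weak minimality of $\E{X_{1:r}}$ at $\lambda=0$ to strict minimality of $\E{T}$ at small $\lambda>0$. When $\E{X_{1:r}}$ is strictly decreasing in $r$ the strict gap at $\lambda=0$ transfers to a neighborhood by continuity; the only leftover case is the degenerate one in which enlarging $r$ gives no stochastic improvement, and there $r=n$ still attains the minimum weakly, so the conclusion stands.
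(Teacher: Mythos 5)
Your proposal is correct and follows essentially the same route the paper intends: the high-traffic half is exactly the chain \Cref{coro:EC_vs_r} $\to$ \Cref{coro:high_traffic_comp} (via \Cref{lem:r_E_X_1_r_trend} and \Cref{thm:E_C_r_trend}), and the low-traffic half rests on $\E{T}\to\E{X_{1:r}}$ as $\lambda\to 0$ together with the pathwise monotonicity $X_{1:r+1}\le X_{1:r}$. The paper states this corollary without a written-out proof, so your careful handling of the $O(\lambda)$ ``some server busy'' event and of weak versus strict minimality simply fills in details the authors leave implicit; no gap.
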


\begin{figure}[t]
\centering
\includegraphics[width=3.25in]{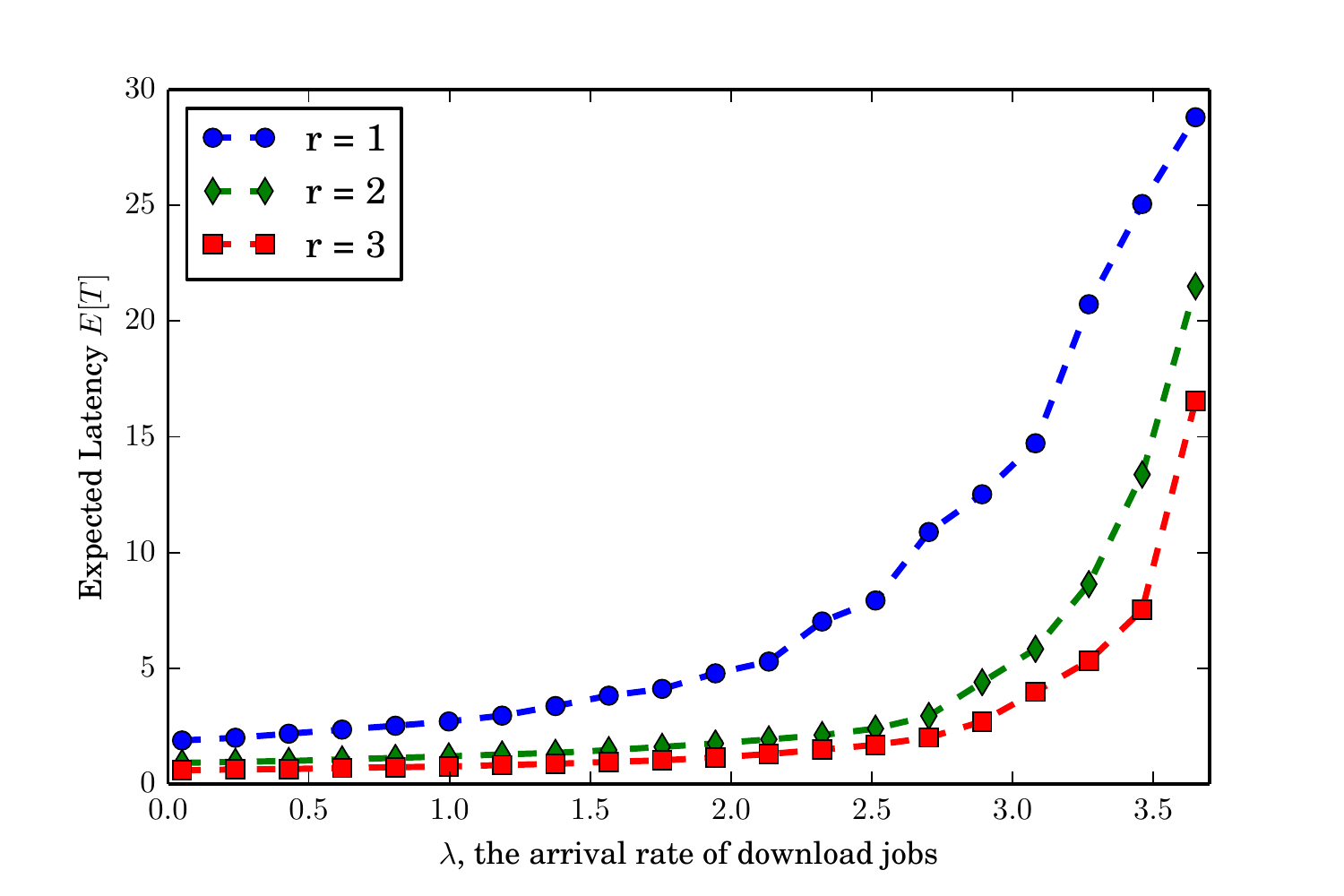}
\caption{For $X \sim \HyperExp(p, \mu_1, \mu_2)$ with $p=0.1$, $\mu_1 = 1.5$, and $\mu_2 = 0.5$ which is log-convex, forking to more servers (larger $r$) gives lower expected latency for all $\lambda$. \label{fig:E_T_rep_hyper_exp_vs_lambda_diff_r}}
 \vspace{-0.25cm}
\end{figure}

\Cref{coro:latency_vs_r} is illustrated by Fig.~\ref{fig:E_T_rep_shifted_exp_vs_lambda_diff_r} and Fig.~\ref{fig:E_T_rep_hyper_exp_vs_lambda_diff_r} where $\E{T}$ is plotted versus $\lambda$ for different values of $r$. Each job is assigned to $r$ servers chosen uniformly at random from $n = 6$ servers. In \Cref{fig:E_T_rep_shifted_exp_vs_lambda_diff_r} the service time distribution is $\SExp(\Delta, \mu)$ (which is log-concave) with $\Delta = 1$ and $\mu = 0.5$. When $\lambda$ is small, more redundancy (higher $r$) gives lower $\E{T}$, but in the high $\lambda$ regime, $r=1$ gives lowest $\E{T}$ and highest service capacity. On the other hand in Fig.~\ref{fig:E_T_rep_hyper_exp_vs_lambda_diff_r}, for a log-convex distribution $\HyperExp(p, \mu_1, \mu_2)$, in the high load regime $\E{T}$ decreases as $r$ increases.

\Cref{coro:latency_vs_r} was previously proven for new-better-than-used (new-worse-than-used) instead of log-concave (log-convex) $\bar{F}_X$ in \cite{shah_when_2013,koole_righter_2008}, using a combinatorial argument. Using \Cref{thm:E_C_r_trend}, we get an alternative, and arguably simpler way to prove this result. Note that our version is slightly weaker because log-concavity implies new-better-than-used but the converse is not true in general. 

\subsection{Choice of the $r$ servers}
For a given $r$, we now compare different policies of choosing the $r$ servers for each job. The choice of the $r$ servers determines the relative starting times of the tasks. If all the $r$ tasks start at the same time, $\E{C} = r \E{X_{1:r}}$. By comparing with the bounds in \Cref{thm:E_C_r_trend} that hold for any relative task start times we get the following result. 


\begin{coro}[Cost for different policies]
\label{coro:cost_diff_policies}
Given $r$, if $\bar{F}_X$ is log-concave (log-convex), the symmetric policy that results in the tasks starting at the same time ($t_i = 0$ for all $1 \leq i \leq r$) results in higher (lower) $\E{C}$ than one that results in $0 < t_i < \infty$ for some $i$.
\end{coro}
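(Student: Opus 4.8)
The plan is to read this off directly from \Cref{thm:E_C_r_trend}, which already bounds $\E{C}$ for \emph{every} choice of relative task start times $t_1 \leq t_2 \leq \cdots \leq t_r$ (with $t_{r+1} = \cdots = t_n = \infty$ in the partial-forking case). First I would record the boundary case explicitly: substituting $t_2 = \cdots = t_r = 0$ into \eqref{eqn:C_expr} gives $S = X_{1:r}$ and hence $C = r X_{1:r}$, so any symmetric policy under which the $r$ tasks of a job start service simultaneously satisfies $\E{C} = r\E{X_{1:r}}$. This is precisely the extreme value that appears in the bounds of \Cref{thm:E_C_r_trend}.

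Next, for log-concave $\bar{F}_X$, \Cref{thm:E_C_r_trend} gives $\E{C} \leq r\E{X_{1:r}}$ for any symmetric policy, regardless of the start times it induces; since the simultaneous-start policy attains $\E{C} = r\E{X_{1:r}}$, its cost is at least as large as that of any policy producing $0 < t_i < \infty$ for some $i$. Symmetrically, for log-convex $\bar{F}_X$, \Cref{thm:E_C_r_trend} gives $\E{C} \geq r\E{X_{1:r}}$, so the simultaneous-start policy minimizes $\E{C}$. This is exactly the claimed statement; combined with \Cref{clm:capacity_in_terms_of_EC} it also identifies which symmetric policy maximizes the service capacity in each case.

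The only point requiring care is the word ``higher''/``lower'': \Cref{thm:E_C_r_trend} is stated with weak inequalities, so strictly speaking the corollary should be read with $\geq$/$\leq$. To upgrade to a strict inequality one would have to revisit the proof of \Cref{thm:E_C_r_trend} in the extended version and check that the intermediate comparison — bounding each $\E{\posfunc{S-t_i}}$ via the mean-residual-life inequality \eqref{eqn:new_better_than_used} — is strict whenever $\bar{F}_X$ is strictly log-concave (log-convex) and at least one $t_i$ lies in $(0,\infty)$. That strictness bookkeeping is the main (and essentially the only) obstacle; otherwise the result is an immediate corollary of \Cref{thm:E_C_r_trend} evaluated at the boundary $t_i = 0$.
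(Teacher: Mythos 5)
Your proposal matches the paper's own argument: the paper likewise observes that simultaneous starts give $\E{C}=r\E{X_{1:r}}$ and then invokes the bounds of \Cref{thm:E_C_r_trend}, which hold for arbitrary relative start times, to place every other symmetric policy on the appropriate side. Your remark about the weak-versus-strict inequality is a fair caveat that the paper glosses over, but the substance of the proof is the same.
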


\begin{coro}[Latency in high $\lambda$ regime]
\label{coro:latency_diff_policies}
Given $r$, if $\bar{F}_X$ is log-concave (log-convex), the symmetric policy that results in the tasks starting at the same time ($t_i = 0$ for all $1 \leq i \leq r$) results in higher (lower) $\E{T}$ in the high traffic regime than one that results in $0 < t_i < \infty$ for some $i$.
\end{coro}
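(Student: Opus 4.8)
The plan is to derive the statement directly from Corollary~\ref{coro:cost_diff_policies} together with Corollary~\ref{coro:high_traffic_comp}. Both policies being compared are symmetric by hypothesis, so Claim~\ref{clm:capacity_in_terms_of_EC} applies to each of them: a symmetric policy that induces relative task start times $(t_i)$ has service capacity $\lambda_{max} = n/\E{C}$, where $\E{C}$ is the expected computing cost obtained by taking expectations in \eqref{eqn:C_expr}.

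First I would invoke Corollary~\ref{coro:cost_diff_policies}: for log-concave $\bar{F}_X$ the policy with $t_i = 0$ for all $1 \le i \le r$ has $\E{C}$ at least as large as that of any symmetric policy with $0 < t_i < \infty$ for some $i$, while for log-convex $\bar{F}_X$ the inequality is reversed. By Claim~\ref{clm:capacity_in_terms_of_EC}, the policy with the larger $\E{C}$ therefore has the smaller service capacity. Fixing that policy and letting $\lambda$ increase toward its capacity, its $\E{T}$ diverges (for the group-based case this is the Pollaczek--Khinchine blow-up in \eqref{eqn:E_T_group_based}; in general it is the failure of $\E{T} < \infty$ at $\lambda_{max}$), whereas the competing policy, whose capacity is strictly larger, still has finite $\E{T}$ at such $\lambda$. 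Hence for $\lambda$ sufficiently close to the smaller of the two capacities, the ordering of $\E{T}$ agrees with the ordering of $\E{C}$: the $t_i = 0$ policy gives higher $\E{T}$ when $\bar{F}_X$ is log-concave and lower $\E{T}$ when $\bar{F}_X$ is log-convex. This is exactly Corollary~\ref{coro:high_traffic_comp} specialized to the present pair of policies.

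There is no genuine obstacle here once Corollary~\ref{coro:cost_diff_policies} is available; the only subtlety is bookkeeping about the two distinct service capacities, so that ``high-traffic regime'' is read as $\lambda$ approaching the smaller capacity, where the comparison is unambiguous. It is also worth stressing that the conclusion is confined to that limiting regime: for moderate $\lambda$ the growth of $\E{S}$ with the $t_i$ in \eqref{eqn:S_tail_dist} --- the loss of diversity from staggering task start times --- can outweigh the cost effect, so no ordering of $\E{T}$ is asserted there.
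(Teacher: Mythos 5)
Your proposal is correct and follows essentially the same route as the paper: the $\E{C}$ ordering from \Cref{coro:cost_diff_policies} combined with \Cref{coro:high_traffic_comp} (equivalently, \Cref{clm:capacity_in_terms_of_EC} and the divergence of $\E{T}$ at the smaller service capacity) yields the latency ordering in the high-traffic regime. Your explicit bookkeeping about the two distinct capacities and the caveat that no ordering is claimed for moderate $\lambda$ are sound and consistent with the paper's intent.
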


For example, lets us compare the group-based random and uniform random policies. The $r$ tasks may start at different times with the uniform random policy, whereas they always start simultaneously with group-based random policy. Thus, in the high $\lambda$ regime, that uniform random policy results lower latency for log-concave $\bar{F}_X$. But for log-convex $\bar{F}_X$, group-based forking is better in the high $\lambda$ regime.

\section{Concluding Remarks}
\label{sec:conclu}
We consider a redundancy model where a computing job is replicated at multiple servers, and we wait for any one copy to finish and cancel the rest. We analyze how redundancy affects the latency, and the cost of computing time, and demonstrate how the log-concavity of service time is a key factor in determining the best redundancy strategy. For example, if the service time is log-convex, adding maximum redundancy reduces both latency and cost. For log-concave service time, can reduce latency, but increases the cost of computing time. Thus, adding fewer replicas, and canceling redundant tasks early is more effective, especially in the high traffic regime.

Using these insights, in \cite{gauri_tompecs_arxiv_2015}, we propose a general redundancy strategy for an arbitrary service time distribution, that may be neither log-concave nor log-convex. Ongoing work includes developing online strategies to simultaneously learn the service distribution, and the best redundancy strategy. More broadly, the proposed redundancy techniques can be used to reduce latency in several applications beyond the realm of cloud storage and computing systems, for example crowdsourcing, algorithmic trading, manufacturing etc.\

\section{Acknowledgments}
We thank Devavrat Shah, Da Wang, Sem Borst and Rhonda Righter for helpful discussions during this work. 


\bibliographystyle{ieeetr}
\bibliography{storage,computing}

\end{document}